
\documentclass[
submission
,nomarks]{dmtcs-episciences}

\usepackage[utf8]{inputenc}
\usepackage{subfigure}
\usepackage[english]{babel}
\usepackage[T1]{fontenc}
\usepackage{amsmath,amssymb}
\usepackage[arrow, matrix, curve]{xy}
\usepackage{amsthm}
\usepackage{todonotes}
\usepackage{graphicx}
\usepackage{hyperref}
\usepackage{dsfont}
\usepackage{braket}
\usepackage{cite}
\usepackage[shortlabels]{enumitem}
\usepackage{tikz}
\usepackage{verbatim}
\usepackage{todonotes}




\newcommand{\pt}{bi\-colored~$P_3$}
\newcommand{\ptt}{bi\-colored-$P_3$}
\newcommand{\Pt}{Bicolored~$P_3$}
\newcommand{\BPDs}{\textsc{BPD}}

\newtheorem{theorem}{Theorem}[section]
\newtheorem{lemma}[theorem]{Lemma}
\newtheorem{proposition}[theorem]{Proposition}
\newtheorem{corollary}[theorem]{Corollary}

\newtheorem{claim}{Claim}
\theoremstyle{definition}
\newtheorem{defi}[theorem]{Definition}

\newtheorem{reduc}{Reduction Rule}{\bfseries}{\itshape}
\newtheorem{branch}{Branching Rule}{\bfseries}{\itshape}
\theoremstyle{definition}
\newtheorem*{claimproof}{\normalfont{\textit{Proof}}}

\definecolor{myred}{rgb}{1,0.25,0.25}

\newcommand{\Oh}{\ensuremath{\mathcal{O}}}

\newcommand{\cC}{\mathcal{C}}

\newcommand{\proofparagraph}[1]{\par\smallskip\textit{#1}}

\author{Niels Gr\"uttemeier
	\and Christian Komusiewicz \\
	\and Jannik Schestag
	\and Frank~Sommer\thanks{FS was supported by the DFG, project MAGZ (KO~3669/4-1).}}
	
\title{Destroying \Pt s by Deleting Few~Edges}
  
\affiliation{Fachbereich Mathematik und Informatik, Philipps-Universität Marburg, Marburg, Germany}

\keywords{NP-hard problem, graph modification, edge-colored graphs, parameterized complexity, graph classes}

\received{2020-02-17}

\revised{2021-04-09}

\accepted{2021-05-07}

\begin{document}
\publicationdetails{23}{2021}{1}{14}{6108}
\maketitle
\begin{abstract} We introduce and study the \textsc{\Pt{} Deletion} problem defined as
  follows. The input is a graph~$G=(V,E)$ where the edge set~$E$ is partitioned into a set~$E_r$ of red
  edges and a set~$E_b$ of blue edges. The question is whether we can delete at
  most~$k$ edges such that~$G$ does not contain a bicolored~$P_3$ as an induced
  subgraph. Here, a bicolored~$P_3$ is a path on three vertices with one blue and one red
  edge. We show that \textsc{\Pt{} Deletion} is NP-hard and cannot be solved
  in~$2^{o(|V|+|E|)}$~time on bounded-degree graphs if the ETH is true. Then, we
  show that \textsc{\Pt{} Deletion} is polynomial-time solvable when~$G$ does not contain a
  bicolored~$K_3$, that is, a triangle with edges of both colors. We also provide a polynomial-time algorithm for the case that~$G$ contains no blue~$P_3$, red~$P_3$, blue~$K_3$, and
  red~$K_3$. Finally, we show that~\textsc{\Pt{} Deletion} can be solved
  in~$\Oh(1.84^k\cdot |V| \cdot |E|)$~time and that it admits a kernel with~$\Oh(k\Delta\min(k,\Delta))$
  vertices, where~$\Delta$ is the maximum degree of~$G$. 
  \end{abstract}
  
\subsection*{Acknowledgment}

We would like to thank the reviewers of \emph{Discrete Mathematics and Theoretical Computer Science} for their helpful comments and  Michał Pilipczuk (University of Warsaw) for
 pointing out the connection to Gallai colorings and their characterization. A preliminary version of this work appeared in \emph{Proceedings of the 15th Conference on Computability
               in Europe (CiE~'19)}, volume 11558 of \emph{Lecture Notes in Computer Science}, pages 193--204. The full version contains all missing proofs and an improved running time analysis of the fixed-parameter algorithm. 
  Some of the results of this work are also contained in the third author's Bachelor thesis~\cite{Sche19}. 

\section{Introduction}
Graph modification problems are a popular topic in computer science. In these problems,
one is given a graph and wants to apply a minimum number of modifications, for example
edge deletions, to obtain a graph that fulfills some graph property~$\Pi$.

An important reason for the popularity of graph modification problems
is their usefulness in graph-based data analysis. A classic problem in this context is 
\textsc{Cluster Editing} where we may insert and delete edges and~$\Pi$ is the set of
cluster graphs. These are exactly the graphs that are disjoint unions of cliques and it is
well-known that a graph is a cluster graph if and only if it does not contain a~$P_3$, a path on three
vertices, as induced subgraph. \textsc{Cluster Editing} has many applications~\cite{BB13}, for example
in clustering gene interaction networks~\cite{BSY99} or protein
sequences~\cite{WBLR07}. The variant where we may only delete edges is known as
\textsc{Cluster Deletion}~\cite{SST04}. Further graph-based data analysis problems that
lead to graph modification problems for some graph property~$\Pi$ defined by small
forbidden induced subgraphs arise in the analysis of biological~\cite{BHK15,HWL+15} or
social networks~\cite{BHSW15,NG13}.

  Besides the application, there is a more theoretical reason why graph modification problems are very important in computer science: Often these problems are
NP-hard~\cite{LY80,Yan81} and thus they represent interesting case studies for algorithmic
approaches to NP-hard problems.  For example, by systematically categorizing graph
properties based on their forbidden subgraphs one may outline the border between tractable
and hard graph modification problems~\cite{ASS17,Kom18,Yan81}.

In recent years, multilayer graphs have become an increasingly important tool for
integrating and analyzing network data from different sources~\cite{KAB+14}. Formally,
multilayer graphs can be viewed as edge-colored (multi-)graphs, where each edge color represents one
layer of the input graph. With the advent of multilayer graphs in network analysis it can be
expected that graph modification problems for edge-colored graphs will arise in many applications as it was the case in uncolored graphs.

One example for such a problem is \textsc{Module
  Map}~\cite{SK18}. Here, the input is a simple graph with red
and blue edges and the aim is to obtain by a minimum number of edge deletions and
insertions a graph that contains no~$P_3$ with two blue edges, no~$P_3$ with a red and a
blue edge, and no a triangle, called~$K_3$, with two blue edges and one red edge. \textsc{Module Map} arises in computational biology~\cite{AS14,SK18}; the red layer represents genetic interactions
and the blue layer represents physical protein interactions~\cite{AS14}. 

Motivated by the practical application of \textsc{Module Map}, an edge deletion problem
with bicolored forbidden induced subgraphs, we aim to study such problems from a more
systematic and algorithmic point of view. Given the importance of~$P_3$-free graphs in the
uncolored case, we focus on the problem where we want to destroy all
\emph{bicolored~$P_3$s}, that is, all~$P_3$s with one blue and one red edge, by edge
deletions.
\begin{quote}
  \textsc{\Pt{} Deletion (BPD)}\\
  \textbf{Input}: A two-colored graph~$G=(V, E_r,E_b)$ 
  and an integer~$k \in \mathds{N}$.\\
  \textbf{Question}: Can we delete at most~$k$ edges from~$G$ such that the remaining graph contains no \pt{} as induced subgraph?
\end{quote}

We use~$E:=E_r\uplus E_b$ to denote the set of all edges of~$G$,~$n:=|V|$ to denote the number of vertices in~$G$, and~$m:=|E|$ to denote the number of edges in~$G$.

Bicolored $P_3$s are closely connected to Gallai colorings of complete graphs~\cite{Gal67,GS04}. A Gallai coloring is an edge-coloring such that the edges of every triangle receive at most two different colors. When we view nonedges of~$G$ as edges with a third color, say green, then a bicolored~$P_3$ is the same as a triangle that violates the property of Gallai colorings. Thus, \BPDs{} is essentially equivalent to the following problem: Given a complete graph with an edge-coloring with the colors red, blue, and green that is not a Gallai coloring, can we transform the coloring into a Gallai coloring by recoloring at most~$k$ blue or red edges with the color green?
\paragraph{Our Results.}  We show that \BPDs{} is NP-hard and that, assuming the Expo\-nential-Time Hypo\-thesis (ETH)~\cite{IPZ01}, it cannot be solved
in a running time that is subexponential in the instance size. We then study two different
aspects of the computational complexity of the problem. 

First, we consider special cases that can be solved in polynomial time, motivated by
similar studies for problems on uncolored graphs~\cite{BLS99}. We are in particular
interested in whether or not we can exploit structural properties of input graphs that can
be expressed in terms of \emph{colored} forbidden subgraphs. We show that \BPDs{} can be solved
in polynomial time on graphs that do not contain a certain type of bicolored~$K_3$s as induced subgraphs, where bicolored~$K_3$s are triangles with edges of both colors. Moreover, we show that \BPDs{} can be solved in polynomial time on graphs that contain no~$K_3$s with one edge color and no~$P_3$s
with one edge color as induced subgraphs.

Second, we consider the parameterized complexity of \BPDs{} with respect to the natural
parameter~$k$. We show that \BPDs{} can be solved in~$\Oh(1.84^k\cdot nm )$~time and
that it admits a problem kernel with $\Oh(k\Delta\min(k,\Delta))$~vertices, where~$\Delta$ is the
maximum degree in~$G$. As a side result, we show that \BPDs{} admits a trivial problem
kernel with respect to~$\ell:= m-k$.



\section{Preliminaries}
We consider undirected simple graphs~$G$ with vertex set~$V$ and edge set~$E$, where~$E$ is partitioned into a set~$E_b$ of \emph{blue edges} and a set~$E_r$ of \emph{red edges}, denoted by~$G=(V,E_r,E_b)$. For a vertex~$v$,~$N_G(v):=\{u\mid \{u,v\}\in E\}$ denotes the \emph{open
  neighborhood} of~$v$ and~$N_G[v]:=N_G(v) \cup \{v\}$ denotes the \emph{closed neighborhood} of~$v$. For a vertex set~$W$,~$N_G(W):=\bigcup_{w\in W}N(w)\setminus W$ denotes the \emph{open neighborhood} of~$W$ and~$N_G[W]:=N_G(W)\cup W$ denotes the \emph{closed neighborhood} of~$W$. The \emph{degree}~$\deg(v):=|N_G(v)|$ of a vertex~$v$ is the size of its open neighborhood. We let~$N_G^2(v):=N_G(N_G(v))\setminus \{v\}$ denote the \emph{second neighborhood} of~$v$. For any two vertex sets~$V_1, V_2 \subseteq V$, we denote by~$E_G(V_1,V_2):= \{ \{v_1,v_2\} \in E \mid v_1 \in V_1, v_2 \in V_2 \}$ the set of edges between~$V_1$ and~$V_2$ in~$G$ and write~$E_G(V'):=E_G(V',V')$. In each context we may omit the subscript~$G$ if the graph is clear from the context.

For any~$V'\subseteq V$,~$G[V']:=(V',E(V')\cap E_r, E(V') \cap E_b)$ denotes the \emph{subgraph induced by~$V'$}. We say that some graph~$H=(V^H,E^H_r,E_b^H)$ is an \emph{induced subgraph} of $G$ if there is a set~$V' \subseteq V$, such that~$G[V']$ is isomorphic to~$H$, otherwise~$G$ is called \emph{$H$-free}. Two vertices~$u$ and~$v$ are \emph{connected} if there is a path from~$u$ to~$v$ in~$G$. A \emph{connected component} is a maximal vertex set~$S$ such that each two vertices are connected in~$G[S]$. A \emph{clique} in a graph~$G$ is a set~$K \subseteq V$  of vertices such that in~$G[K]$ each pair of vertices is adjacent. The graph~$(\{u,v,w\},\{\{u,v\}\},\{\{v,w\}\})$ is called \emph{\pt}. We say that a vertex~$v\in V$ is \emph{part of} a \pt~in~$G$ if there is a set~$V' \subseteq V$ with~$v \in V'$ such that~$G[V']$ is a \pt. Furthermore, we say that two edges~$\{u,v\}$ and~$\{v,w\}$ \emph{form} a \pt~if~$G[\{u,v,w\}]$ is a \pt. An edge~$e$ is~\emph{part of a~\pt} if there exists some other edge~$e'$ such that~$e$ and~$e'$ form a~\pt. For any edge set~$E'$ we denote by~$G-E':=(V, E_r \setminus E', E_b \setminus E')$ the graph we obtain by deleting all edges in~$E'$. As a shorthand, we write~$G-e:=G-\{e\}$ for an edge~$e$. An edge deletion set~$S$ is a \emph{solution} for an instance~$(G,k)$ of \BPDs~if~$G-S$ is \ptt-free and~$|S|\le k$.

A \emph{branching rule} for some problem~$L$ is a computable function that maps an instance~$w$ of~$L$ to a tuple of instances~$(w_1, \dots, w_t)$ of~$L$. A branching rule is called \emph{correct} if~$w$ is a yes-instance for~$L$ if and only if there is some~$i \in \{1,\dots,t\}$ such that~$w_i$ is a yes-instance of~$L$. The application of branching rules gives rise to a search tree whose size is analyzed using branching vectors; for more details refer to the textbook of Fomin and Kratsch~\cite{FK10}.  A \emph{reduction rule} for some problem~$L$ is a computable function that maps an instance~$w$ of~$L$ to an instance~$w'$ of~$L$ such that~$w$ is a yes-instance if and only if~$w'$ is a yes-instance. 

\emph{Parameterized Complexity} is the analysis of the complexity of problems depending on the input size~$n$ and a problem parameter~$k$~\cite{Cyg+15,DF13}. A problem is called \emph{fixed-parameter tractable} if there exists an algorithm with running time~$f(k) \cdot n^{\Oh(1)}$ for some computable function~$f$ that solves the problem. An important tool in the development of parameterized algorithms is \emph{problem kernelization}. Problem kernelization is a polynomial-time preprocessing by \emph{reduction rules}: A problem~$L$ admits a problem kernel if, given any instance~$I$ of~$L$ with parameter~$k$, one can compute an equivalent instance~$I'$ of~$L$ with parameter~$k'$ in polynomial time such that~$k' \leq k$ and the size of~$I'$ is bounded by some computable function~$g$ only depending on~$k$. The function~$g$ is called~\emph{kernel size}. The \emph{Exponential Time Hypothesis (ETH)} is a standard complexity theoretical conjecture used to prove lower bounds.
It implies that~\textsc{3-SAT} cannot be solved in~$2^{o(|\phi|)}$~time where~$\phi$ denotes the input formula~\cite{IPZ01}.

\section{Bicolored $\mathbf{P_3}$ Deletion is NP-hard}
In this section we prove the NP-hardness of \BPDs. This motivates our study of polynomial-time solvable cases and the parameterized complexity in Sections~\ref{Section: polytime} and~\ref{Section: param}, respectively.
\begin{theorem} \label{Theorem: NP-h}
\BPDs{} is NP-hard even if the maximum degree of~$G$ is~$8$.
\end{theorem}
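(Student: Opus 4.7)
The plan is to reduce in polynomial time from a bounded-occurrence variant of $3$-SAT (one in which every variable occurs in at most some fixed constant number of clauses), which is known to be NP-hard. Given a formula $\varphi$ over variables $x_1,\dots,x_n$ and clauses $C_1,\dots,C_m$, I would build a two-colored graph $G$ and an integer $k$ such that $\varphi$ is satisfiable if and only if $(G,k)$ is a yes-instance of \BPDs{}, while keeping every vertex of $G$ at degree at most eight.

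For each variable $x_i$ I would build a \emph{variable gadget}: a small bicolored subgraph containing two ``output'' vertices $x_i^+$ and $x_i^-$ together with enough internal \pt{}s that, at some fixed local cost $k_i$, any minimum local solution must either deactivate all edges incident to $x_i^+$ or all edges incident to $x_i^-$; the first choice encodes $x_i=\textsc{true}$, the second $x_i=\textsc{false}$. Because a variable may have to feed several clauses, I would replicate its output via monochromatic ``extension chains'' that contain no \pt{} themselves, so that no single vertex accumulates too many incident edges of mixed colour. For each clause $C_j=\ell_1\vee\ell_2\vee\ell_3$ I would then attach a \emph{clause gadget} joining the three literal copies $\ell_1,\ell_2,\ell_3$ to a small bicolored core whose internal \pt{}s can be eliminated within a fixed local budget $k_j$ if and only if at least one of the three literal edges has survived the variable phase. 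The overall budget is set to $k:=\sum_i k_i+\sum_j k_j$.

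Correctness would be verified in both directions. Given a satisfying assignment, choose the matching local solution in every variable gadget and, for each clause, the local solution enabled by any true literal; this yields a global deletion set of size at most $k$. Conversely, any deletion set of size at most $k$ must spend exactly $k_i$ at every variable gadget---so each variable receives a well-defined truth value---and exactly $k_j$ at every clause gadget, which forces at least one literal per clause to be true, and hence induces a satisfying assignment.

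The main obstacle will be the uniform degree cap of eight. The bounded-occurrence source bounds the per-variable fan-out by a constant, and the extension-chain trick distributes this fan-out across several low-degree copies of each output vertex, but it still has to be checked by careful case analysis that no interaction between a variable gadget, its extension chains, and an incident clause gadget pushes any vertex above degree eight. Designing the bicolored cores of the variable and clause gadgets small enough to accommodate this bound, while still producing the intended \pt{}-structure, will be the key technical step of the proof.
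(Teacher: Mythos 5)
Your overall strategy --- a polynomial-time reduction from a bounded-occurrence variant of 3-SAT with variable gadgets forcing a binary choice and clause gadgets with a tight local budget --- is exactly the strategy of the paper, which reduces from \textsc{(3,4)-Sat}. However, what you have written is a template, not a proof: every piece of actual mathematical content is deferred. You give no concrete variable gadget, no concrete clause gadget, and no lower-bound argument showing that the local budgets $k_i$ and $k_j$ are tight; you yourself flag the design of the gadgets as ``the key technical step''. In this problem those tightness arguments are the substance of the proof. The paper's variable gadget is a star with four blue and four red leaves (every blue--red pair of incident edges is a \pt{}, so four deletions are necessary and the only two minimum-cost local solutions are ``delete all blue'' or ``delete all red''), and its clause gadget requires a packing argument exhibiting twelve edge-disjoint \pt{}s that survive any three deletions touching all three literal vertices, in order to prove that $14$ deletions suffice if and only if at least one literal is made true. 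Without arguments of this kind, your claim that any solution ``must spend exactly $k_i$ at every variable gadget and exactly $k_j$ at every clause gadget'' is unsupported.

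Moreover, one specific mechanism you propose would fail as described: monochromatic ``extension chains'' that ``contain no \pt{} themselves'' cannot transmit a truth value. In an edge-deletion problem a minimum solution never deletes an edge that lies in no \pt{}, so deleting the edges at the variable end of such a chain imposes no constraint whatsoever at the clause end; the chain is inert. To make fan-out work you would need every chain edge to participate in \pt{}s (which then consumes budget and upsets your exact accounting), or you must avoid chains altogether. The paper does the latter: since each variable occurs at most four times, it simply \emph{identifies} the at most four true-leaves and four false-leaves of the variable star with the literal vertices of the corresponding clause gadgets. This keeps the central vertex at degree eight, needs no propagation argument, and is the reason the bounded-occurrence source suffices for the degree bound.
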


\begin{proof}
We present a polynomial-time reduction from the NP-hard \textsc{(3,4)-SAT} problem where one is given a 3-CNF formula~$\phi$ where each variable occurs in at most four clauses, and the question is if there is a satisfying assignment for~$\phi$~\cite{Tovey84}.

Let~$\phi$ be a 3-CNF formula with variables~$X=\{x_1, \dots, x_{|X|}\}$ and clauses~$\cC=\{C_1, \dots, C_{|\cC|}\}$ with four occurrences per variable. For a given variable $x_i$ that occurs in a clause $C_j$ we define the \emph{occurrence number} $\Psi (C_j,x_i)$ as the number of clauses in $\{C_1, C_2, \dots, C_j \}$ where~$x_i$ occurs. Intuitively, $\Psi(C_j,x_i)=r$ means that the $r$th occurrence of variable~$x_i$ is the occurrence in clause~$C_j$. Since each variable occurs in at most four clauses, we have~$\Psi(C_j,x_i) \in \{1,2,3,4\}$.

\proofparagraph{Construction:} We describe how to construct an equivalent instance~$(G=(V,E_r,E_b),k)$ of \BPDs{} from~$\phi$.

For each variable~$x_i \in X$ we define a \emph{variable gadget} as follows. The variable gadget of~$x_i$ consists of a \emph{central vertex}~$v_i$ and two vertex sets~$T_i:=\{t_i^1,t_i^2,t_i^3,t_i^4\}$ and~$F_i:=\{f_i^1,f_i^2,f_i^3,f_i^4\}$. We add a blue edge from~$v_i$ to every vertex in~$T_i$ and a red edge from~$v_i$ to every vertex in~$F_i$.

For each clause~$C_j \in \mathcal{C}$ we define a \emph{clause gadget} as follows. The clause gadget of~$C_j$ consists of three vertex sets~$A_j:=\{a_j^1,a_j^2,a_j^3\}$, $B_j:=\{b_j^1,b_j^2,b_j^3\}$, and~$W_j:=\{w_j^1, w_j^2, w_j^3 ,w_j^4\}$. We add blue edges such that the vertices in~$B_j\cup W_j$ form a clique with only blue edges in~$G$. Moreover, for each~$p \in \{1,2,3\}$, we add a blue edge~$\{a_j^p,b_j^p\}$ and a red edge~$\{a_j^p,u\}$ for every~$u \in W_j \cup B_j \setminus \{b_j^p\}$. Observe that there are no edges between $a_j^1$, $a_j^2$, and $a_j^3$; all other vertex pairs are connected either by a red edge or a blue edge.

We connect the variable gadgets with the clause gadgets by identifying vertices in~$T_i \cup F_i$ with vertices in~$A_j$ as follows. Let~$C_j$ be a clause containing variables~$x_{i_1}, x_{i_2}$, and~$x_{i_3}$. For each~$p \in \{1,2,3\}$ we set~
$$a_j^p = \begin{cases}
t_{i_p}^{\Psi(C_j,x_{i_p})}&\text{if }x_{i_p}\text{ occurs as a positive literal in }C_j\text{, and}\\
f_{i_p}^{\Psi(C_j,x_{i_p})}&\text{if }x_{i_p}\text{ occurs as a negative literal in }C_j\text{.}
\end{cases}$$

Now, for every variable~$x_i \in X$ each vertex in~$T_i \cup F_i$ is identified with at most one vertex~$a_j^p$. Figure~\ref{Figure: NP-h construction} shows an example of a clause gadget and its connection with the variable gadgets. To complete the construction of the BPD instance~$(G,k)$ we set~$k:=4\cdot |X|+14 \cdot |\cC|$.

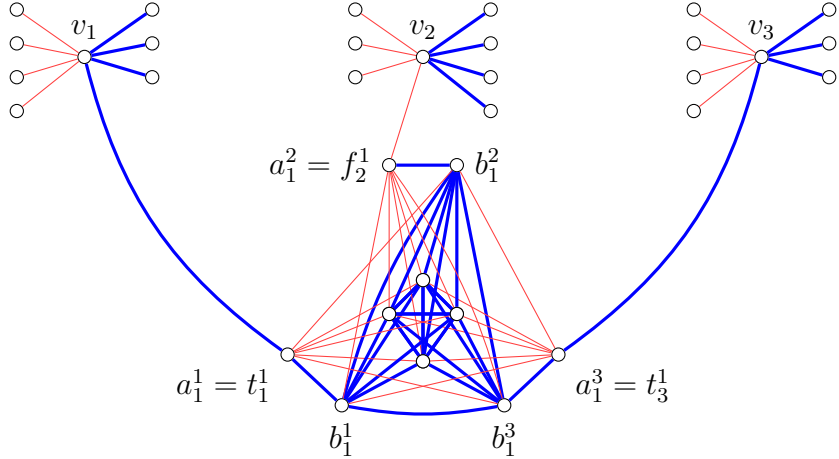
\begin{figure}
\begin{center}
\begin{tikzpicture}[scale=0.9]
\tikzstyle{knoten}=[circle,fill=white,draw=black,minimum size=5pt,inner sep=0pt]
\tikzstyle{bez}=[inner sep=0pt]
		\node[knoten] (a1j)[label=210:{$a^1_1=t_1^1$}] at (-2,-0.4) {};
                \node[knoten] (a2j)[label=left:{$a^2_1=f_2^1$}] at (-0.5,2.4) {};
		\node[knoten] (a3j)[label=330:{$a^3_1=t_3^1$}] at (2,-0.4) {};

                \node[knoten] (b1j)[label=270:{$b^1_1$}] at (-1.2,-1.15) {};
		\node[knoten] (b2j)[label=right:{$b^2_1$}] at (0.5,2.4) {};
		\node[knoten] (b3j)[label=270:{$b^3_1$}] at (1.2,-1.15) {};

		

		\node[knoten] (w1) at (0,0.7) {};
		\node[knoten] (w2) at (0,-0.5) {};
		\node[knoten] (w3) at (-0.5,0.2) {};
		\node[knoten] (w4) at (0.5,0.2) {};
		
		\draw[-,color=blue,very thick] (a1j) to (b1j);
		\draw[-,color=blue,very thick] (a2j) to (b2j);
		\draw[-,color=blue,very thick] (a3j) to (b3j);

		\draw[-,color=blue,very thick,bend left=10] (b1j) to (b2j);
		\draw[-,color=blue,very thick,bend right=10] (b1j) to (b3j);
                \draw[-,color=blue,very thick] (b2j) to (b3j);
		
		\foreach \x in {1,2,...,4}{
         \draw[-, color=myred] (a1j) to (w\x);
         \draw[-, color=blue, very thick] (b1j) to (w\x);
         \draw[-, color=myred] (a2j) to (w\x);
         \draw[-, color=blue, very thick] (b2j) to (w\x);
         \draw[-, color=myred] (a3j) to (w\x);
         \draw[-, color=blue, very thick] (b3j) to (w\x);
         \foreach \y in {1,2,...,4}{
         \draw[-, color=blue, very thick] (w\y) to (w\x);
         }
      	}
      	\draw[-, color=myred] (a1j) to (b2j);
      	\draw[-, color=myred] (a1j) to (b3j);
      	\draw[-, color=myred] (a2j) to (b1j);
      	\draw[-, color=myred,bend left=10] (a2j) to (b3j);
      	\draw[-, color=myred] (a3j) to (b1j);
      	\draw[-, color=myred] (a3j) to (b2j);
      


		\node[knoten] (v1)[label=above:{$v_1$}] at (-5.0,4.0) {};
		\node[knoten] (v2)[label=above:{$v_2$}] at (00,4.0) {};
		\node[knoten] (v3)[label=above:{$v_3$}] at (5.0,4.0) {};
		
		\node[knoten] (t11) at (-4.0,4.7) {};
		\node[knoten] (t12) at (-4.0,4.2) {};
		\node[knoten] (t13) at (-4.0,3.7) {};
		\node[knoten] (f11) at (-6.0,4.7) {};
		\node[knoten] (f12) at (-6.0,4.2) {};
		\node[knoten] (f13) at (-6.0,3.7) {};
		\node[knoten] (f14) at (-6.0,3.2) {};
		
		\node[knoten] (f21) at (-1.0,4.7) {};
		\node[knoten] (f22) at (-1.0,4.2) {};
		\node[knoten] (f23) at (-1.0,3.7) {};
		\node[knoten] (t24) at (1.0,3.2) {};
		\node[knoten] (t21) at (1.0,4.7) {};
		\node[knoten] (t22) at (1.0,4.2) {};
		\node[knoten] (t23) at (1.0,3.7) {};
		
		\node[knoten] (t31) at (6.0,4.7) {};
		\node[knoten] (t32) at (6.0,4.2) {};
		\node[knoten] (t33) at (6.0,3.7) {};
		\node[knoten] (f31) at (4.0,4.7) {};
		\node[knoten] (f32) at (4.0,4.2) {};
		\node[knoten] (f33) at (4.0,3.7) {};
		\node[knoten] (f34) at (4.0,3.2) {};
		
		\foreach \x in {1,2,3}{
			\foreach \y in {1,2,3}{
         		\draw[-, color=myred] (v\x) to (f\x\y);
         		\draw[-, color=blue,very thick] (v\x) to (t\x\y);
         	}
        }
        \draw[-, color=myred] (v1) to (f14);
        \draw[-, color=blue, very thick, bend right=20] (v1) to (a1j);
        
        \draw[-, color=blue,very thick] (v2) to (t24);
        \draw[-, color=myred] (v2) to (a2j);
        
		\draw[-, color=myred] (v3) to (f34);
        \draw[-, color=blue, very thick, bend left=20] (v3) to (a3j);
		
		\node[knoten] (w1) at (0,0.7) {};
		\node[knoten] (w2) at (0,-0.5) {};
		\node[knoten] (w3) at (-0.5,0.2) {};
		\node[knoten] (w4) at (0.5,0.2) {};
		
\end{tikzpicture}
\end{center} 
\caption{The lower part of the figure shows the clause gadget of a clause~$C_1=(x_1 \lor \overline{x_2} \lor x_3)$. The upper part of the figure shows variable gadgets representing variables~$x_1$, $x_2$, and~$x_3$. The vertices~$a^1_1, a^2_1$, and~$a^3_1$ from the clause gadget are identified with vertices from the variable gadgets. The bold lines represent blue edges and the thin lines represent red edges.} \label{Figure: NP-h construction}
\end{figure}

\proofparagraph{Intuition:}
Before showing the correctness of the reduction, we describe its idea. For each variable~$x_i$ we have to delete all blue edges in~$E(\{v_i\},T_i)$ or all red edges in~$E(\{v_i\},F_i)$ in the corresponding variable gadget. Deleting the edges in~$E(\{v_i\},T_i)$ assigns true to the variable~$x_i$ while deleting the edges in~$E(\{v_i\},F_i)$ assigns false to~$x_i$. Since we identify vertices in~$T_i \cup F_i$ with vertices in~$A_j$ the information of the truth assignment is transmitted to the clause gadgets. We will be able to make a clause-gadget \ptt-free with~$14$ edge deletions if and only if there is at least one vertex in~$A_j$ which is incident with a deleted edge of its variable gadget.

\proofparagraph{Correctness:} We now show the correctness of the reduction by proving that there is a satisfying assignment for~$\phi$ if and only if~$(G,k)$ is a yes-instance of \BPDs.

$(\Rightarrow)$ Let~$\mathcal{A}:X \rightarrow \{\text{true},\text{false}\}$ be a satisfying assignment for~$\phi$. In the following, we construct a solution~$S$ for~$(G,k)$.

For each variable~$x_i$, we add~$E(\{v_i\},T_i)$ to~$S$ if~$\mathcal{A}(x_i)=\text{true}$ and we add~$E(\{v_i\},F_i)$ to~$S$ if~$\mathcal{A}(x_i)=\text{false}$. Note that for each variable we add exactly four edges to~$S$.
For each $C_j \in \cC$ we add the following edges: Since~$\mathcal{A}$ is satisfying, $C_j$ contains a variable~$x_i$ such that~$\mathcal{A}(x_i)$ satisfies~$C_j$. By the construction of~$G$ there is exactly one~$p \in \{1,2,3\}$ such that~$a_j^p=t^{\Psi(C_j,x_i)}_i$ if~$x_i$ occurs as a positive literal in~$C_j$ or~$a_j^p=f^{\Psi(C_j,x_i)}_i$ if~$x_i$ occurs as a negative literal in~$C_j$. For both~$q \in \{1,2,3\} \setminus \{p\}$ we add~$E(\{a^{q}_j\},W_j\cup B_j)$ to~$S$. Note that since~$|E(\{a^{q}_j\},W_j\cup B_j)| = |W_j\cup B_j| =7$ we add exactly~$14$ edges per clause. Thus, we have an overall number of~$4 \cdot |X| +14\cdot |\cC|$ edges in~$S$.

Let~$G':=G-S$. It remains to show that there is no \pt~in~$G'$. Let~$C_j \in \cC$. We first show that no edge in~$E_{G'}(A_j \cup B_j \cup W_j)$ is part of a \pt~in~$G'$. For any two vertices~$u_1,u_2 \in W_j\cup B_j$ it holds that~$N_{G'}[u_1]=N_{G'}[u_2]$. Hence, no edge in~$E_{G'}(W_j \cup B_j)$ is part of an induced~$P_3$ in~$G'$ and therefore no edge in~$E_{G'}(W_j \cup B_j)$ is part of a \pt~in~$G'$. It remains to show that no edge in~$E_{G'}(A_j,W_j \cup B_j)$ is part of a \pt~in~$G'$. Without loss of generality assume that~$E(\{a^{1}_j\} \cup \{a^{2}_j\} ,W_j\cup B_j) \subseteq S$ and $E(\{a^{3}_j\} ,W_j\cup B_j) \cap S = \emptyset$. Thus, the only possible edges in~$E_{G'}(A_j \cup B_j \cup W_j)$ that might form a \pt~in~$G'$ are the edges in~$E(\{a^{3}_j\} ,W_j\cup B_j)$. We show for every~$u \in W_j \cup B_j$ that~$\{a^3_j,u\} \in E_{G'}(A_j,W_j \cup B_j)$ is not part of an induced \pt~in~$G'$. Since~$\{a_j^3\} \cup B_j \cup W_j$ is a clique in~$G'$, the edge $\{a^3_j,u\}$ may only form a \pt~with an edge~$\{a_j^3,v_i\}$, where~$v_i$ is the central vertex of the variable gadget of a variable~$x_i$ occurring in $C_j$. By the construction of~$G$ it follows that $a^3_j \in \{ t^{\Psi(C_j,x_i)}_i, f^{\Psi(C_j,x_i)}_i \}$ for exactly one variable~$x_i$. Since~$E(\{a^{3}_j\} ,W_j\cup B_j) \cap S = \emptyset$, the assignment~$\mathcal{A}(x_i)$ satisfies clause~$C_j$ by the construction of~$S$. If~$\mathcal{A}(x_i)=\text{true}$, then~$\{a^3_j, v_i\} = \{t_i^{\Psi(C_j,x_i)}, v_i\}\in E_G(\{v_i\},T_i) \subseteq S$ and therefore~$\{a^3_j, v_i\}$ is not an edge of~$G'$. Analogously, if~$\mathcal{A}(x_i)=\text{false}$, then~$\{a^3_j, v_i\} = \{f_i^{\Psi(C_j,x_i)}, v_i\}\in E_G(\{v_i\},F_i) \subseteq S$ is not an edge of~$G'$. Hence, no edge in~$E_{G'}(A_j \cup B_j \cup W_j)$ is part of a \pt~in~$G'$.

Let~$x_i \in X$. We show that no edge in~$E_{G'}( \{v_i\} \cup T_i \cup F_i)$ is part of a \pt~in~$G'$. Since either~$E_{G}( \{v_i\}, T_i) \subseteq S$ or~$E_{G}( \{v_i\}, F_i) \subseteq S$, all edges in~$E_{G'}( \{v_i\} \cup T_i \cup F_i)$ have the same color. Hence, there is no \pt~in~$G'$ consisting of two edges from one variable gadget. Since there is no vertex in~$G'$ that is adjacent to two vertices of distinct variable gadgets, an edge~$e \in E_{G'}( \{v_i\} \cup T_i \cup F_i)$ may only form a \pt~with an edge in~$E_{G'}(A_j \cup B_j \cup W_j)$ for some clause~$C_j$. However, since no edge in~$E_{G'}(A_j \cup B_j \cup W_j)$ is part of a \pt~in~$G'$ as shown above, $e$ does not form a \pt~with an edge from a clause gadget. Therefore, no edge in $E_{G'}( \{v_i\} \cup T_i \cup F_i)$ is part of a \pt. It follows that~$G'$ does not contain any \pt.

$(\Leftarrow)$ Conversely, let~$S$ be a solution for~$(G,k)$. For every variable~$x_i\in X$ we have~$|S \cap E_G(\{v_i\},T_i \cup F_i)| \geq 4$ since every edge in~$E_G(\{v_i\},T_i)$ forms a \pt~with every edge in~$E_G(\{v_i\},F_i)$. 

Before we define a satisfying assignment~$\mathcal{A}: X \rightarrow \{\text{true}, \text{false}\}$ for~$\phi$, we take a more detailed look at the edges of the clause gadgets that need to be in~$S$. Let~$C_j \in \cC$ be a clause and let~$G_j:=G[A_j\cup B_j\cup W_j]$ be the induced subgraph of the corresponding clause gadget. We show that~$14$ edge deletions are necessary and sufficient to transform~$G_j$ into a \ptt-free graph. Obviously, for pairwise distinct~$p,q,r \in \{1,2,3\}$, deleting the~$14$ edges in~$E_{G_j}(\{a_j^p,a_j^q\},B_j\cup W_j)$ transforms~$G_j$ into a \ptt-free graph, since~$\{a_j^r\} \cup B_j \cup W_j$ is a clique in~$G_j$. Hence, deleting~$14$ edges is sufficient. It remains to show that when deleting less than~$14$ edges there are still \pt s in~$G_j$. To this end, we show that either one of the vertices in~$A_j$ is not incident with an edge deletion in~$G_j$ or we need more than~$14$ edge deletions to transform~$G_j$ into a \ptt-free graph. We consider three vertices~$u_1, u_2, u_3 \in B_j \cup W_j$ representing the endpoints of deleted edges incident with~$a_j^1$, $a_j^2$, and~$a_j^3$, respectively. Let~$S_j:=\{ \{a_j^p, u_p \} \mid p \in \{1,2,3\}\}$. The following claim gives a lower bound on the number of edge deletions in~$G_j$ after deleting~$S_j$.

\begin{claim} \label{Claim: One clause guy without edgedel}
There are at least~$12$ edge-disjoint~\pt s in~$G_j-S_j$.
\end{claim}
\begin{claimproof}
We define three sets~$\mathcal{P}^1$,~$\mathcal{P}^2$, and~$\mathcal{P}^3$ containing~\pt s and show that the union $\mathcal{P}^1 \cup \mathcal{P}^2 \cup \mathcal{P}^3$ contains at least~$12$ edge-disjoint \pt s in~$G_j - S_j$. Here, we represent \pt s by edge sets of size two. For each~$p \in \{1,2,3\}$ we set
\begin{align*}
\mathcal{P}^p := \{  \{ \{a_j^p, w\}, \{w,u_p\} \} \mid w \in (B_j \cup W_j) \setminus \{b_j^p, u_p\}  \}.
\end{align*}
Let~$p \in \{1,2,3\}$. Since $E_{G_j}(W_j \cup B_j) \subseteq E_b$, and~$E_{G_j}( \{a_j^p\}, (B_j \cup W_j) \setminus \{b_j^p\}) \subseteq E_r$, it follows that every set~$P \in \mathcal{P}^p$ is a \pt~in~$G_j- S_j$. Obviously, the \pt s in~$\mathcal{P}^p$ are edge-disjoint and $|\mathcal{P}^p| = |(B_j \cup W_j) \setminus \{b_j^1,u_1\}| = 5$. 

We now show that the union~$\mathcal{P}^1 \cup \mathcal{P}^2 \cup \mathcal{P}^3$ contains at least~$12$ edge-disjoint \pt s in~$G_j - S_j$. To this end, consider the following subset
\begin{align*}
\mathbb{P}:= (\mathcal{P}^1 \cup \mathcal{P}^2 \cup \mathcal{P}^3) \setminus \{Q_1,Q_2,Q_3\} \subseteq \mathcal{P}^1 \cup \mathcal{P}^2 \cup \mathcal{P}^3,
\end{align*}
where
\begin{align*}
Q_1 &:= \{ \{a_j^2, u_1\}, \{u_1, u_2\} \},\\
Q_2 &:= \{ \{a_j^3, u_1\}, \{u_1, u_3\} \},\\
Q_3 &:= \{ \{a_j^3, u_2\}, \{u_2, u_3\} \}.
\end{align*}
Obviously,~$Q_1, Q_2, Q_3 \not \in \mathbb{P}$ and $|\mathbb{P}| \geq 3 \cdot 5-3=12$. It remains to show that all \pt s in~$\mathbb{P}$ are edge-disjoint. Assume towards a contradiction that there are~$P,R \in \mathbb{P}$ with~$P \neq R$ and~$P \cap R \neq \emptyset$. Since every~$\mathcal{P}^p$ contains edge-disjoint \pt s, it follows that~$P \in \mathcal{P}^{p}$ and~$R \in \mathcal{P}^{r}$ for some~$p \neq r$. Without loss of generality assume~$p < r$. Since for every~$w \in B_j \cup W_j$, the edges~$\{a_j^{r}, w \}$ are not part of any \pt~in~$\mathcal{P}^{p}$ and, conversely, the edges~$\{a_j^{p}, w \}$ are not part of any \pt~in~$\mathcal{P}^{r}$ it follows that~$P \cap R = \{\{u_{p},u_{r}\}\}$. We conclude~$R=\{  \{a^{r}_j, u_{p}\}, \{u_{p}, u_{r}\}\}$.

Consider the case~$p = 1$ and~$r =2$. Then,~$R=Q_1 \not \in \mathbb{P}$. Analogously, if~$p = 1$ and~$r =3$, then~$R=Q_2 \not \in \mathbb{P}$, and if~$p=2$ and~$r=3$, then~$R=Q_3 \not \in \mathbb{P}$. In every case we have~$R \not \in \mathbb{P}$ which contradicts the assumption~$P, R \in \mathbb{P}$. Hence, there are no \pt s in~$\mathbb{P}$ that share an edge and therefore~$\mathcal{P}^1 \cup \mathcal{P}^2 \cup \mathcal{P}^3$ contains at least~$12$ edge-disjoint \pt s as claimed. $\hfill \Diamond$ 

\end{claimproof}

Claim \ref{Claim: One clause guy without edgedel} implies that if every vertex in~$A_j$ is incident with an edge in~$S \cap E_{G_j}(A_j,B_j \cup W_j)$, we have $|S \cap E_{G_j}(A_j \cup B_j \cup W_j)| \geq 3+12=15$.
We now show that deleting~$E_{G_j}(\{a_j^p,a_j^q\},B_j\cup W_j)$ for distinct~$p,q \in \{1,2,3\}$ are the only three possible ways to transform~$G_j$ into a \ptt-free graph with less than~$15$ edge deletions. By Claim~\ref{Claim: One clause guy without edgedel}, we can assume without loss of generality that~$E_{G_j}(\{a_j^3\},W_j \cup B_j) \cap S = \emptyset$.  We show that this implies that all edges incident with~$a^1_j$ and~$a^2_j$ in~$G_j$ are deleted by~$S$.

\begin{claim} \label{Claim: NP-h all edges deleted}
If~$E_{G_j}(\{a_j^3\},W_j \cup B_j) \cap S = \emptyset$, then~$E_{G_j}(\{a_j^p\},B_j \cup W_j) \subseteq S$ for~$p=1$ and for~$p=2$.
\end{claim}

\begin{claimproof}
First, note that no edge~$\{b^3_j, w\}$ with $w \in B_j \cup W_j$ is an element of~$S$, since otherwise~$\{a^3_j,b^3_j\} \in E_b$ and~$\{a^3_j,w\} \in E_r$ form a \pt~in~$G_j-S$ which contradicts the fact that~$G-S$ is \ptt-free.
Next, consider~$\{a_j^p,b_j^3\}$. Clearly, $\{a_j^p,b_j^3\}$ is an element of~$S$, since otherwise~$\{a_j^p,b_j^3\} \in E_r$ and~$\{a_j^3,b_j^3\} \in E_b$ form a \pt~in~$G_j-S$ which contradicts the fact that~$G-S$ is \ptt-free. 

It remains to show that $E_{G_j}(\{a_j^p\},(B_j \setminus \{b_j^3\}) \cup W_j) \subseteq S$. Assume towards a contradiction that there exists an edge~$\{a_j^p, w\}$ in~$G_j-S$ with~$w \in (B_j \cup W_j) \setminus \{b_j^3\}$. If~$w = b_j^p$, the edges~$\{a_j^p, b_j^p\} \in E_b$ and~$\{b_j^p, a_j^3 \} \in E_r$ form a \pt~in~$G_j-S$. Otherwise, if~$w \neq b_j^p$ the edges~$\{a_j^p, w\} \in E_r$ and~$\{w, b^3_j\} \in E_b$ form a \pt~in~$G_j-S$. Both cases contradict the fact that~$G-S$ is \ptt-free and therefore~$E_{G_j-S}(\{a_j^p\},B_j \cup W_j) = \emptyset$ as claimed.~$\hfill \Diamond$
\end{claimproof}

We conclude from Claim \ref{Claim: NP-h all edges deleted} that deleting the~$14$ edges in~$E_{G_j}(\{a_j^p,a_j^q\},B_j\cup W_j)$ for distinct~$p,q \in \{1,2,3\}$ are the only three possible ways to destroy all \pt s in~$G_j$ with at most~$14$ edge deletions.

This fact combined with the fact that we need at least~$4$ edge deletions per variable gadget and~$|S|\leq 4 \cdot |X|+14 \cdot |\cC|$ implies that~$|E_G(A_j\cup B_j \cup W_j) \cap S|=14$ for each clause~$C_j$ and $|E_G(\{v_i\} \cup T_i \cup F_i) \cap S|=4$ for each variable~$x_i$.

We now define a satisfying assignment~$\mathcal{A}: X \rightarrow \{\text{true},\text{false}\}$ for~$\phi$ by
\begin{align*}
\mathcal{A}(x_i) := \begin{cases}
\text{true} &\text{if }E(\{v_i\},T_i) \subseteq S\text{, and}\\
\text{false}&\text{if }E(\{v_i\},F_i) \subseteq S\text{.}\\
\end{cases}
\end{align*}
The assignment~$\mathcal{A}$ is well-defined since in each variable gadget either all red or all blue edges belong to~$S$. 

Let~$C_j \in \cC$ be some clause in~$\phi$. It remains to show that~$C_j$ is satisfied by~$\mathcal{A}$. Since~$|E_G(A_j\cup B_j \cup W_j) \cap S|=14$ there are distinct~$p,q\in\{1,2,3\}$ such that~$S \cap E_G(A_j\cup B_j \cup W_j) = E_G(\{a_j^p,a_j^q\} , B_j\cup W_j)$. Without loss of generality assume~$p=1$ and~$q=2$. Therefore~$E_G(\{a_j^3\},W_j \cup B_j) \cap S = \emptyset$. By the construction of~$G$ we know that there is a variable~$x_i \in X$ occurring in~$C_j$ such that either~$a^3_j=t_i^{\Psi(C_j,x_i)}$ or~$a^3_j=f_i^{\Psi(C_j,x_i)}$.
We show that clause~$C_j$ is satisfied by the assignment~$\mathcal{A}(x_i)$.

For all~$w \in W_j$, $G-S$ contains the red edge~$\{a^3_j,w\}$. Moreover,~$G-S$ contains the blue edge~$\{a^3_j,b_j^3\}$. Since no vertex in~$W_j \cup B_j$ is adjacent to~$v_i$ we conclude that~$\{a^3_j, v_i\} \in S$ since otherwise~$\{a^3_j, v_i\}$ is part of a \pt~in~$G'$ which contradicts the fact that~$G'$ is \ptt-free.
If~$a^3_j=t_i^{\Psi(C_j,x_i)}$, then variable~$x_i$ occurs as a positive literal in~$C_j$ by the construction of~$G$. Then, $\{a^3_j, v_i\} \in E(\{v_i\},T_i)$. We conclude from~$\{a^3_j, v_i\} \in S$ that~$E(\{v_i\},T_i) \subseteq S$ and therefore~$\mathcal{A}(x_i)=\text{true}$. Analogously, if~$a^3_j=f_i^{\Psi(C_j,x_i)}$, then~$\{a^3_j, v_i\} \in E(\{v_i\},F_i)$. From~$\{a^3_j, v_i\} \in S$ we conclude~$\mathcal{A}(x_i)= \text{false}$.
In both cases, the assignment~$\mathcal{A}$ satisfies~$C_j$ which completes the correctness proof. 
%
%
%
\end{proof}

Note that for any instance~$\phi$ of \textsc{(3,4)-SAT} it holds that~$|\cC| \leq \frac{4}{3} |X|$. Thus, in the proof of Theorem~\ref{Theorem: NP-h} we constructed a graph with~$8 \cdot |X|+42 \cdot |\cC| \in \mathcal{O}(|X|)$ edges, $k=4 \cdot |X|+14 \cdot |\cC| \in \mathcal{O}(|X|)$, and therefore~$\ell = 4 \cdot |X| +28 \cdot |\cC| \in \mathcal{O}(|X|)$ for the dual parameter~$\ell=m -k$. Considering the ETH~\cite{IPZ01} and the fact that there is a reduction from \textsc{3-SAT} to \textsc{(3,4)-SAT} with a linear blow-up in the number of variables~\cite{Tovey84} this implies the following.

\begin{corollary} \label{Corollary: ETH lower bounds}
If the ETH is true, then \BPDs{} cannot be solved in~$2^{o(n+m+k+\ell)}$ time even if the maximum degree in~$G$ is~$8$.
\end{corollary}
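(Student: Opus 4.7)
The plan is to chain the reduction from Theorem~\ref{Theorem: NP-h} with the standard ETH-based lower bound for \textsc{(3,4)-Sat}. First, I would recall that the ETH, together with the linear-blow-up reduction from \textsc{3Sat} to \textsc{(3,4)-Sat} given by Tovey~\cite{Tovey84}, implies that \textsc{(3,4)-Sat} cannot be solved in~$2^{o(|X|)}$ time, where~$|X|$ is the number of variables. Since in a \textsc{(3,4)-Sat} instance each variable appears in at most four clauses and each clause has three literals, we moreover have~$|\cC| \le \tfrac{4}{3}|X|$, so this bound is equivalent to ruling out any $2^{o(|X|+|\cC|)}$-time algorithm.

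Next, I would plug in the construction from the proof of Theorem~\ref{Theorem: NP-h}, which is already polynomial-time and produces a graph of maximum degree eight. The only remaining work is the bookkeeping that every one of the four quantities $n,m,k,\ell$ is linear in~$|X|$: the paragraph preceding the corollary already verifies $m = 8|X| + 42|\cC| \in \Oh(|X|)$, $k = 4|X|+14|\cC| \in \Oh(|X|)$, and $\ell = m-k = 4|X|+28|\cC| \in \Oh(|X|)$; and $n \le 9|X| + 7|\cC| \in \Oh(|X|)$ because each variable gadget contributes at most nine vertices, each clause gadget contributes the seven new vertices in~$B_j \cup W_j$, and the three vertices of~$A_j$ are identified with vertices from variable gadgets.

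Finally, I would conclude by contradiction. Suppose BPD could be solved in $2^{o(n+m+k+\ell)}$ time on graphs of maximum degree eight. Composing this algorithm with the polynomial-time reduction of Theorem~\ref{Theorem: NP-h} yields a decision procedure for \textsc{(3,4)-Sat} running in time $2^{o(|X|)} \cdot |X|^{\Oh(1)} = 2^{o(|X|)}$, contradicting the ETH-based lower bound established in the first step. I expect the main (very mild) obstacle to be simply being explicit that all four parameters $n,m,k,\ell$ are simultaneously linear in~$|X|$; the tightest case is~$\ell$, and this is precisely where the counting $\ell = m - k = 4|X| + 28|\cC|$ is needed, rather than the naive bound $\ell \le m$.
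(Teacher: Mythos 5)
Your proposal is correct and matches the paper's argument: the paper likewise derives the corollary by observing that $|\cC|\le\frac{4}{3}|X|$, that the construction of Theorem~\ref{Theorem: NP-h} yields $m=8|X|+42|\cC|$, $k=4|X|+14|\cC|$, and $\ell=4|X|+28|\cC|$, all in $\Oh(|X|)$, and then invoking the ETH together with Tovey's linear-blow-up reduction from \textsc{3Sat} to \textsc{(3,4)-Sat}. Your explicit bound on $n$ and the final contradiction step are just slightly more detailed bookkeeping of the same route.
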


\section{Polynomial-Time Solvable Cases} \label{Section: polytime}

Since \BPDs~is NP-hard, there is little hope to find a polynomial-time algorithm that solves \BPDs~on arbitrary instances. In this section we provide polynomial-time algorithms for two special cases of \BPDs~that are characterized by colored forbidden induced subgraphs.

\subsection{\BPDs~on Bicolored~$\mathbf{K_3}$-free Graphs} Our first result is a polynomial-time algorithm for \BPDs, when~$G=(V,E_r,E_b)$ does not contain a certain type of bicolored~$K_3$s.
\begin{defi}
  Three vertices~$u,v,w$ form a \emph{bicolored~$K_3$} if~$G[\{u,v,w\}]$ contains exactly three edges such that exactly two of them have the same color.  A bicolored~$K_3$ is \emph{endangered in~$G$} if at least one of the two edges with the
  same color is part of a \pt~in~$G$.
\end{defi}

A bicolored~$K_3$ on vertices~$u,v,w$ can be seen as an induced subgraph of~$G$, such that after one edge deletion in~$E_G(\{u,v,w\})$ one might end up with a new~\pt{} containing the vertices~$u,v$, and~$w$. This happens, if we delete one of the two edges with the same color. If the bicolored~$K_3$ is endangered, it might be necessary to delete one of these two edges to transform~$G$ into a \ptt-free graph. Intuitively, a graph~$G$ that contains no (endangered) bicolored~$K_3$ can be seen as a graph from which we can delete any edge that is part of a \pt{} without producing a new one. Note that the following result also implies that \BPDs{} can be solved in polynomial time on triangle-free graphs and thus also on bipartite graphs.

%
%

\begin{theorem}\label{thm:endangered}
  \BPDs{} can be solved in~$\Oh(nm^{\frac{3}{2}})$ time if~$G$ contains no endangered bicolored~$K_3$. 
\end{theorem}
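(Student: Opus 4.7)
The plan is to reduce \BPDs{} on such graphs to \textsc{Minimum Vertex Cover} on a bipartite auxiliary graph, exploiting the key fact that every \pt{} consists of exactly one red and one blue edge.

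First, I would construct the auxiliary graph $H$ whose vertex set $V(H) := E_r \cup E_b$ consists of all edges of $G$, and in which a red edge $e_r$ and a blue edge $e_b$ are adjacent whenever they share an endpoint and together form a \pt{} in $G$. Then $H$ is bipartite by construction, and can be built in $\Oh(nm)$ time by enumerating, for each vertex $v$, all pairs of a red and a blue edge incident to $v$ and checking whether the two non-$v$ endpoints are non-adjacent in $G$. Every solution $S$ for $(G,k)$ must be a vertex cover of $H$, since for each edge of $H$ — that is, for each \pt{} in $G$ — the set $S$ has to contain one of its two edges.

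The main obstacle is the converse direction: to show that for a minimum vertex cover $S^*$ of $H$, the graph $G - S^*$ is actually \pt-free. The potential problem is that edge deletions might \emph{create} new \pt s. A new \pt{} on $\{u,v,w\}$ in $G - S^*$ can only arise if $\{u,v,w\}$ was a triangle in $G$ whose third edge lies in $S^*$ and whose two remaining edges differ in color — in other words, $\{u,v,w\}$ is a bicolored $K_3$ in $G$ and the deleted edge is one of its two same-color edges. Since $S^*$ is a \emph{minimum} vertex cover, every edge in $S^*$ is incident to some edge of $H$ and thus is part of some \pt{} of $G$ (otherwise we could drop it from $S^*$). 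Hence the deleted edge is a same-color edge of a bicolored $K_3$ that is itself part of a \pt{}, so this bicolored $K_3$ is endangered, contradicting the hypothesis.

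Finally, it remains to compute a minimum vertex cover of the bipartite graph $H$ efficiently. By König's theorem this equals the size of a maximum matching in $H$, which can be found by the Hopcroft--Karp algorithm in $\Oh(\sqrt{|V(H)|} \cdot |E(H)|)$ time, and from which a minimum vertex cover can be extracted in linear time. Since $|V(H)| = m$ and $|E(H)| \leq \sum_{v\in V} \deg_r(v) \cdot \deg_b(v) = \Oh(nm)$ (each \pt{} has a unique center vertex, and for each vertex $v$ the number of centered \pt s is at most $\deg_r(v)\deg_b(v) \leq n\cdot \deg_r(v)$), the total running time is $\Oh(\sqrt{m} \cdot nm) = \Oh(nm^{3/2})$, matching the claim.
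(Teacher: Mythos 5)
Your proposal is correct and follows essentially the same route as the paper: reduce to \textsc{Vertex Cover} on the bipartite conflict graph over $E_r\cup E_b$, use the minimality of the cover to show that any newly created \pt{} would certify an endangered bicolored~$K_3$, and solve via Hopcroft--Karp in $\Oh(nm^{3/2})$ time. The only cosmetic difference is that the paper argues with a \emph{minimal} vertex cover where you use a \emph{minimum} one, which changes nothing.
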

\begin{proof}
We prove the theorem by reducing \BPDs~to \textsc{Vertex Cover} on bipartite graphs which can be solved in polynomial time since it is equivalent to computing a maximum matching.

\begin{quote}
  \textsc{Vertex Cover}\\
  \textbf{Input:} A graph~$G=(V, E)$ and an integer~$k \in \mathds{N}$.\\
  \textbf{Question:} Is there a \emph{vertex cover} of size at most~$k$ in~$G$, that is, a set~$S\subseteq V$ with~$|S| \leq k$ such that every edge~$e \in E$ has at least one endpoint in~$S$?
\end{quote}

 Let~$(G=(V,E_b,E_r),k)$ be an instance of \BPDs~where~$G$ contains no endangered bicolored~$K_3$. We define an instance~$(G',k')$ of \textsc{Vertex Cover} as follows. Let~$G'=(V',E')$ be the graph with vertex set~$V' := E_r \cup E_b$ and edge set~$E' := \{\{e_1,e_2\} \subseteq E_b\cup E_r \mid e_1 \text{ and } e_2 \text{ form a \pt~in }G \}$. That is,~$G'$ contains a vertex for each edge of~$G$ and edges are adjacent if they form a~$P_3$ in~$G$. Moreover, let~$k'=k$. The graph~$G'$ is obviously bipartite with partite sets~$E_b$ and~$E_r$.

We now show that~$(G,k)$ is a yes-instance for \textsc{BPD} if and only if~$(G',k')$ is a yes-instance for \textsc{Vertex Cover}.

$(\Rightarrow)$ Let~$S$ be a solution for~$(G,k)$. Note that the edges~of~$G$ are vertices of~$G'$ by construction and therefore~$S \subseteq V'$. We show that~$S$ is a vertex cover in~$G'$.
Assume towards a contradiction that there is an edge~$\{ x, y \} \in E'$ with~$x,y \not \in S$. By the definition of~$E'$, the edges~$x$ and~$y$ form a \pt~in~$G$. This contradicts the fact that~$G-S$ is \ptt-free. Hence, $S$ is a vertex cover of size at most~$k$ in~$G'$.

$(\Leftarrow)$ Let~$C \subseteq V'$ with~$|C| \leq k$ be a minimal vertex cover of~$G'$. Note that the vertices of~$G'$ are edges of~$G$ by construction and therefore~$C \subseteq E$. We show that~$G-C$ is \ptt-free.
 Assume towards a contradiction that there are edges~$x=\{u,v\}\in E_b \setminus C$ and~$y=\{v,w\}\in E_r \setminus C$ forming a \pt~in~$G-C$. Then,~$x$ and~$y$ do not form a \pt~in~$G$ since otherwise there is an edge~$\{x,y\} \in E'$, which has no endpoint in the vertex cover~$C$. It follows that there is an edge~$\{u,w\}$ in~$G$ that is not present in~$G-C$. Consequently,~$\{u,w\} \in C$. 
  Obviously, the vertices~$u,v,w$ form a bicolored~$K_3$. 
Since~$x$ and~$y$ form a \pt~in~$G-C$, one of these edges has the same color as~$\{u,w\}$.
 Since~$\{u,w\} \in C$ and~$C$ is minimal, it follows that~$\{u,w\} \in V'$ is an endpoint of an edge in~$G'$ and thus~$\{u,w\}$ is part of a \pt~in~$G$. Therefore,~$G[\{ u,v,w\}]$ forms an endangered bicolored~$K_3$ in~$G$ which contradicts the fact that~$G$ contains no endangered bicolored~$K_3$. This proves the correctness of the reduction.


For a given instance~$(G,k)$ of \BPDs, the \textsc{Vertex Cover} instance~$(G',k')$ can be computed in~$\mathcal{O}(nm)$ time by computing all \pt s of~$G$. Since \textsc{Vertex Cover} can be solved in~$\mathcal{O}(|E'| \cdot \sqrt{|V'|})$ time on bipartite graphs~\cite{HK73} and since $|V'|=m$ and~$|E'|\leq nm$, we conclude that \BPDs{} can be solved in~$\mathcal{O}(nm^\frac{3}{2})$ time on graphs without endangered~$K_3$s.
\end{proof}

\subsection{BPD on Graphs without Monochromatic~$K_3$s and~$P_3$s} We now show a second polynomial-time solvable special case that is characterized by four colored forbidden induced subgraphs: the two \emph{monochromatic}~$K_3$s, these are the~$K_3$s where all three edges have the same color, and the two \emph{monochromatic}~$P_3$s, these are the~$P_3$s where both edges have the same color. Observe that a graph that does not contain these forbidden induced subgraphs may still contain~$K_3$s or~$P_3$s. 

We provide two reduction rules that lead to a polynomial-time algorithm for this special case. These rules can also be applied to general instances of~\BPDs~and thus their running time bound is given for general graphs. We will later show that on graphs without monochromatic~$K_3$s and~$P_3$s we can apply them exhaustively in~$\Oh(n)$ time.

\begin{reduc}
\label{reduc:small-components}

\begin{enumerate}[a)]
\item Remove all \ptt-free components from~$G$.
\item If~$G$ contains a connected component~$C$ of
size at most five, then compute the minimum number of edge deletions~$k_C$ to
make~$G[C]$~\ptt-free, remove~$C$ from~$G$, and set~$k\leftarrow k-k_C$.
\end{enumerate}
\end{reduc}
\begin{lemma}
Reduction Rule~\ref{reduc:small-components} is correct and can be exhaustively applied in~$\Oh(nm)$~time.
\end{lemma}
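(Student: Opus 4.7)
The plan is to verify correctness of the two sub-rules by a decomposition argument along connected components, and then bound the running time by direct enumeration.

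The key structural observation is that every bicolored~$P_3$ is connected, so it lies entirely within a single connected component of~$G$. Consequently, if~$\mathcal{C}$ denotes the set of connected components of~$G$ and~$k_C^{\min}$ is the minimum number of edge deletions required to make~$G[C]$ bicolored-$P_3$-free, then the minimum size of a solution for~$(G,k)$ equals~$\sum_{C \in \mathcal{C}} k_C^{\min}$: any solution~$S$ splits along the partition of~$E$ induced by the components, and each restriction~$S \cap E(C)$ must destroy all bicolored~$P_3$s in~$G[C]$ independently. Both sub-rules follow from this decomposition. In part~(a), a bicolored-$P_3$-free component~$C$ has~$k_C^{\min}=0$, so removing~$C$ and leaving~$k$ unchanged gives an equivalent instance. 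In part~(b), for a component~$C$ with~$|C|\leq 5$, the value~$k_C^{\min}$ can be computed explicitly, and removing~$C$ while decreasing~$k$ by~$k_C^{\min}$ preserves yes-instance status.

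For the running time, I would first compute the connected components in~$\Oh(n+m)$ time. To detect bicolored~$P_3$s in~$G$, for each vertex~$v$ I enumerate all pairs~$(e_r,e_b)$ of incident edges with~$e_r\in E_r$ and~$e_b\in E_b$, and check in constant time via adjacency lookups whether the two non-$v$ endpoints are non-adjacent. This costs~$\Oh(\sum_v \deg_b(v)\deg_r(v)) \subseteq \Oh(nm)$ time in total and tags each component as bicolored-$P_3$-free or not. For each component of size at most five there are at most~$\binom{5}{2}=10$ edges, so~$k_C^{\min}$ can be computed by brute-force enumeration of subsets of~$E(C)$ in~$\Oh(1)$ per component, amounting to~$\Oh(n)$ in total. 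Finally, since the rule only removes entire components and never modifies the surviving ones, a single sweep already leaves the instance reduced with respect to both sub-rules, so the overall running time is~$\Oh(nm)$.

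The only delicate point I anticipate is fitting the bicolored~$P_3$ detection into the~$\Oh(nm)$ bound; once the decomposition principle is in place, both the correctness of the two sub-rules and the running-time accounting are routine.
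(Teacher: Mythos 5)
Your proposal is correct and follows essentially the same route as the paper: the paper declares correctness obvious (your component-decomposition argument is exactly the intended justification) and likewise bounds the running time by computing connected components in $\Oh(n+m)$ time, enumerating all bicolored~$P_3$s in $\Oh(nm)$ time, and handling the constant-size components in $\Oh(n)$ time. The only cosmetic difference is that you enumerate bicolored~$P_3$s by iterating over red--blue edge pairs at each center vertex rather than over edge--vertex pairs, which gives the same $\Oh(nm)$ bound.
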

\begin{proof}
The correctness of Reduction Rule~\ref{reduc:small-components} is obvious.  Part a) can be exhaustively applied in~$\Oh(nm)$ time as follows: First compute the connected components of~$G$ in~$\Oh(n+m)$ via breadth-first search. Then, enumerate all \pt s and label the vertices that are not part of any~\pt{} in~$\Oh(nm)$ time. Finally, remove all connected components without labeled vertices~in~$\Oh(n+m)$ time. Part b) of Reduction Rule~\ref{reduc:small-components} can be applied exhaustively in~$\Oh(n)$ time since we only need to find and remove connected components of constant size.
\end{proof}

The second reduction rule involves certain bridges that may be deleted greedily. An edge~$e$ is a \emph{bridge} if the graph~$G-e$ has more connected components than~$G$.
\begin{reduc}
\label{reduc:bridge-conflict}
\begin{enumerate}[a)]
 \item Remove all bridges from~$G$ that are not contained in any \pt{}. 
 \item
   If~$G$
  contains a bridge~$\{u,v\}$ such that\begin{itemize}
\item the connected component~$C$ containing~$v$ in~$G-\{u,v\}$ is~\ptt-free and
\item $\{u,v\}$ forms a~\pt{} with some edge~$\{v,w\}$ of~$C$ in~$G$,
\end{itemize}
then remove~$C$ from~$G$ and set~$k\leftarrow k-1$.
\end{enumerate}
\end{reduc}
\begin{lemma}
  Reduction Rule~\ref{reduc:bridge-conflict} is correct and can be applied exhaustively in $\Oh(n m)$~time.
\end{lemma}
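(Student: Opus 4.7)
The plan is to establish correctness first and then analyze the running time.

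\textbf{Correctness.} The easier direction takes a solution $S'$ of size at most $k-1$ for $(G-C, k-1)$ and argues that $S := S' \cup \{\{u,v\}\}$ solves $(G,k)$. Since $\{u,v\}$ is a bridge with $v \in C$, the vertex $u$ lies outside $C$; adding $\{u,v\}$ to the deletion set separates $C$ from $V \setminus C$ completely in $G-S$. Hence $G-S$ decomposes into $G[C]$ (\pt-free by hypothesis) and $(G-C) - S'$ (\pt-free by choice of $S'$), with no \pt{} crossing between the two parts.

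For the converse, let $S$ be a solution for $(G,k)$. Since $\{u,v\}$ and $\{v,w\}$ form a \pt, at least one of them lies in $S$. The heart of the argument is to show we may assume $\{u,v\} \in S$. If $\{u,v\} \notin S$ but $\{v,w\} \in S$, I would exchange the two by setting $S^* := (S \setminus \{\{v,w\}\}) \cup \{\{u,v\}\}$, which has the same size as $S$. Since $\{u,v\} \in S^*$ is a bridge, $G - S^*$ has no edges between $C$ and $V \setminus C$. Its restriction to $C$ is an induced subgraph of the \pt-free graph $G[C]$, and its restriction to $V \setminus C$ coincides with $(G-S)[V \setminus C]$ because the two swapped edges both have $v \in C$ as an endpoint, so the swap does not affect any edge entirely inside $V \setminus C$. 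Thus $G - S^*$ is \pt-free. Once $\{u,v\} \in S$ (possibly after the swap), set $S' := S \cap E(G-C)$. Since $\{u,v\} \notin E(G-C)$, we get $|S'| \le |S| - 1 \le k-1$, and $(G-C) - S'$ is an induced subgraph of the \pt-free graph $G - S$, hence \pt-free.

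\textbf{Running time.} For one application, I would compute the bridges of $G$ in $\Oh(n+m)$ time via Tarjan's algorithm. For each bridge $\{u,v\}$, the two sides of $G - \{u,v\}$ can be identified by BFS in $\Oh(n+m)$ time; a scan over the colored edges incident to $v$ and to $u$ decides whether a suitable $w$ on either side yields a bicolored \pt{} with $\{u,v\}$; and \pt-freeness of each side can be tested by enumerating induced $P_3$s in $\Oh(nm)$ time. This gives $\Oh(nm^2)$ per application. Each successful application removes a non-trivial connected component, so the rule is applied at most $\Oh(n)$ times in total, yielding $\Oh(n^2 m^2)$ overall.

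The main obstacle is the exchange argument in the converse direction: one must verify that replacing $\{v,w\}$ by $\{u,v\}$ in $S$ introduces no new \pt, which depends crucially on both the bridge property (removing $\{u,v\}$ disconnects $C$ from the rest) and the assumed \pt-freeness of $G[C]$, so that the only induced subgraphs one needs to control after the swap are already known to be \pt-free.
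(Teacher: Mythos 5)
Your final accounting and your running-time analysis are sound and match the paper's proof, but the exchange argument that you single out as ``the heart of the argument'' is both unnecessary and, as written, false. You claim that for $S^* := (S \setminus \{\{v,w\}\}) \cup \{\{u,v\}\}$ the restriction of $G-S^*$ to $C$ is ``an induced subgraph of the \pt-free graph $G[C]$''. It is not: it is $G[C]$ with the edges of $S^* \cap E(G[C])$ removed, i.e.\ an edge-deleted spanning subgraph, and edge deletions can create new \pt{}s. Concretely, let $C=\{v,w,y\}$ with $\{v,w\}$ and $\{v,y\}$ red and $\{w,y\}$ blue (a triangle, hence \pt-free), and let $\{u,v\}$ be a blue bridge. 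Then $S=\{\{v,w\},\{v,y\}\}$ is a valid solution with $\{u,v\}\notin S$, but $S^*=\{\{v,y\},\{u,v\}\}$ leaves the red edge $\{v,w\}$ and the blue edge $\{w,y\}$ while $\{v,y\}$ is deleted, so $\{v,w,y\}$ induces a \pt{} in $G-S^*$; the swap fails.

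Fortunately you never needed the swap. Your own final step --- $S' := S \cap E(G-C)$ has size at most $|S|-1$ because a deleted edge of the \pt{} is not an edge of $G-C$ --- works verbatim in both cases of the disjunction ``$\{u,v\}\in S$ or $\{v,w\}\in S$'', since $\{v,w\}$ has both endpoints in $C$ and $\{u,v\}$ has the endpoint $v$ in $C$, so neither edge lies in $E(G-C)$. This one-line observation is exactly the paper's argument, and $(G-C)-S'=(G-S)[V\setminus C]$ is an induced subgraph of $G-S$ and hence \pt-free. Deleting the exchange paragraph turns your proposal into a correct proof that coincides with the paper's.
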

\begin{proof}
  Let~$(G,k)$ be the original instance and~$(G',k')$ be the instance after the application
  of the rule. We first show the correctness of the two parts of the rule.
  For part~a), observe that for every subgraph~$G^*$ of~$G$, any bridge~$\{u,v\}$ that is removed by the rule is not part of a~\pt{}. Thus, any solution for~$G$ is a solution for~$G-\{u,v\}$ and vice versa. For part b), observe first that
  if~$(G',k')$ has a solution~$S'$, then~$S'\cup \{u,v\}$ is
  a solution for~$(G,k)$ and thus~$(G,k)$ is also a yes-instance. It remains to show that
  if~$(G,k)$ is a yes-instance, then so is~$(G',k')$. Consider a solution~$S$ with~$|S|\le k$
  for~$G$. Observe that~$\{u,v\}\in S$ or~$\{v,w\}\in S$. This
  implies~$S':=S\cap E(G[V\setminus C])\le k-1$. Finally, observe that
  since~$G[V\setminus C]=G'$,~$S'$ is a solution of size at most~$k-1$ for~$G'$.

  The running time can be seen as follows: We compute in~$\Oh(n+m)$ time the bridges of~$G$~\cite{Tar74}. Given the bridges, one can compute in~$\Oh(n+m)$ the block-cut-forest~$F$ of~$G$. The vertices of~$F$ are maximal 2-edge-connected components of~$G$ and the edges correspond to the bridges of~$G$. Then in $\Oh(nm)$~time, we enumerate all \pt{}s. Using the set of \pt{}s, we can compute for each 2-edge-connected component whether it contains a \pt{}. Moreover, for each bridge~$e$ and each incident 2-edge-connected component~$C$, we can compute whether~$e$ forms a \pt{} with some edge of~$C$. Finally, we can compute for each bridge the set of edges with which it forms a conflict. This additional information can be computed in~$\Oh(nm)$~time. We incorporate this information into the block-cut-forest~$F$ as follows: A vertex of~$F$ is colored black if the corresponding 2-edge-connected component contains a \pt{}, otherwise it is colored white.

  The exhaustive application of the rule is now performed on the block-cut-forest~$F$ via
  the following algorithm.  First, remove all white singletons from~$G$ and~$F$. Then,
  remove all bridges of~$G$ from~$G$ and~$F$ that are not part of in any~\pt{}
  in~$G$. Checking whether one of these two conditions is fulfilled can be performed
  in~$\Oh(n+m)$ time per removed edge and vertex. In the following, we assume that these
  removals have been applied
  exhaustively. 
  
  To describe the final part of the reduction, we denote for each vertex~$v$ of~$G$ the
  2-edge-connected component of~$G$ containing~$v$ by~$[v]$. Note that~$[v]$ is a vertex
  of~$F$. Now we check for each vertex~$v$ of~$G$ whether it is incident with a
  bridge~$\{u,v\}$ that fulfills the condition of part~b) of the rule.

  To do this efficiently, we characterize such bridges as follows.

  \begin{claim} \label{Claim:bridge-rule} A bridge~$\{u,v\}$ and a vertex~$v$ fulfill the
    condition of Reduction~Rule~\ref{reduc:bridge-conflict}~b) if and only if
    \begin{itemize}
    \item $\{u,v\}$ forms a~\pt{} with some edge of~$[v]$
      or some bridge that is incident with~$v$,
    \item 
      $[v]$ is white,
    \item $N_F([v])\setminus \{[u]\}$
    contains only white leaf-vertices, and 
  \item all other bridges incident with some vertex
    of~$[v]$ form a \pt{} with~$\{u,v\}$ and are not part of any further \pt{}.
  \end{itemize}

  \end{claim}
  \begin{claimproof}
    Let~$C$ denote the connected component containing~$v$ after the deletion of~$\{u,v\}$.
    Clearly, the stated conditions are sufficient for~$\{u,v\}$ and~$v$ to fulfill the
    requirements of the rule. For most conditions it is also clear that they are necessary
    for~$C$ to be~\pt-free. The only non-obvious condition is
    that~$N_F([v])\setminus \{[u]\}$ contains only leaf-vertices. This condition is
    necessary since, otherwise,~$C$ would contain some bridge~$e$ that is not incident with
    a vertex of~$[v]$. This bridge is part of some~\pt{}, since it has not been
    removed previously. This~\pt{} would also be contained in~$C$ since~$e$ and~$\{u,v\}$
    are not incident. $\hfill \Diamond$
  \end{claimproof}
  This characterization gives us now a way to check in~$\Oh(n+m)$ time whether~$G$
  contains some bridge that fulfills the condition of
  Reduction~Rule~\ref{reduc:bridge-conflict}~b). This check is done as follows. We
  consider each vertex~$x$ of~$F$. At most two bridges incident with some vertex of~$x$
  are candidates for fulfilling the conditions of the claim: all incident bridges must be
  incident with the same vertex~$v$ of~$x$ for the conditions to be fulfilled and an
  edge~$\{u,v\}$ must be the only incident bridge of its color if the conditions are
  fulfilled. For each candidate edge, we check in~$\Oh(|N_F([v])|)$ time whether the
  conditions of the claim are fulfilled using the precomputed information about the
  conflicts in~$G$. Thus, the total running time for this final check is linear in the
  number of edges of~$F$ and thus in~$\Oh(n+m)$. Note that after removing some
  edge~$\{u,v\}$ in this way, the remaining vertices of~$C$ are removed via the previous
  two checks.
  
  Altogether, applying the rules using~$F$ needs~$\Oh(n+m)$~time per removed vertex and
  edge. Since each application removes some vertex or bridge of~$G$, there are in
  total~$\Oh(n)$~applications. Moreover, after removing all isolated vertices from~$G$, we
  have~$n\le 2m$ and thus the overall running time of~$\Oh(nm)$ for the application of
  the rule follows.
\end{proof}

As we will show later, any graph without monochromatic $P_3$s and monochromatic~$K_3$s to which
the above reduction rules do not apply has maximum degree two. These graphs can be solved
in linear time as we see in the following lemma.
\begin{lemma}\label{lem:deg-two}
  Let~$(G,k)$ be an instance of~\BPDs{} such that~$G$ has maximum degree~$2$. Then,~$(G,k)$ can be solved in~$\Oh(n)$ time. 
\end{lemma}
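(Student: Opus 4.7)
The plan is to exploit the fact that a graph with maximum degree two decomposes into vertex-disjoint paths and cycles, and to solve BPD on each connected component independently via a linear-time dynamic program. Since the maximum degree is $2$ we have $m\le n$, so computing the connected components and reading off each component's edge order takes $\Oh(n)$ time in total.

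Fix a component with edge sequence $e_1, e_2, \dots, e_t$. If the component is a triangle, it has no induced $P_3$ at all, so zero deletions suffice. Otherwise the component is a path or a cycle of length at least four; in that case two edges share a vertex iff they are consecutive in the sequence, and any two consecutive edges induce a $P_3$ on their three endpoints. Consequently the \pt{}s of the component are precisely the pairs $(e_i, e_{i+1})$ with $c(e_i)\neq c(e_{i+1})$, and a subset of edges is a solution iff it contains at least one edge from every such pair.

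For a path component I would run a dynamic program with two states per edge: $g(i)$ is the minimum number of deletions in $\{e_1,\dots,e_i\}$ that leaves $e_i$ kept, and $h(i)$ is the analogous minimum when $e_i$ is deleted. The base case is $g(1)=0$, $h(1)=1$, and for $i\ge 2$ I set $h(i) = 1 + \min\{g(i-1), h(i-1)\}$, $g(i) = h(i-1)$ if $c(e_{i-1})\neq c(e_i)$, and $g(i) = \min\{g(i-1), h(i-1)\}$ otherwise. The optimum for the component is $\min\{g(t), h(t)\}$, computed in $\Oh(t)$ time, and correctness is a one-line induction using the characterization of \pt{}s above.

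For a cycle component of length at least four I would branch on the status of $e_1$. If $e_1$ is deleted, the surviving \pt{}s all lie within $e_2,\dots,e_t$, which form a path; run the path DP there and add one. If $e_1$ is kept, I run the same path DP on $e_2,\dots,e_t$ but restrict the boundary transitions so that $e_2$ and $e_t$ may be kept only when their colors match $c(e_1)$; this is implemented by initialising the DP as if a virtual kept edge $e_0$ of color $c(e_1)$ preceded $e_2$ and by admitting at position $t$ only those final states compatible with $c(e_1)$. The cycle optimum is the smaller of the two branches, again computed in $\Oh(t)$ time. Summing the per-component bounds yields the claimed $\Oh(n)$ total; the only delicate point I anticipate is the precise boundary handling for cycles, as the naive DP gives the wrong answer on a triangle, so that case must be excluded explicitly.
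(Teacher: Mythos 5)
Your proof is correct, but it takes a genuinely different route from the paper's. The paper handles cycle components by a case analysis on runs of equally colored edges: the middle edge of three consecutive same-colored edges lies in no \pt{} and can be deleted for free, a run of exactly two same-colored edges forces (up to exchange) the deletion of the two flanking edges, and a properly alternating cycle is solved optimally by deleting all of its blue edges; the resulting paths are then handled by a left-to-right greedy scan that deletes the second edge of each first conflict encountered, justified as an application of Reduction Rule~\ref{reduc:bridge-conflict}. You instead observe that, outside of triangles, the \pt{}s of a component are exactly the consecutive differently-colored edge pairs and that edge deletions create no new ones, which recasts the problem as a minimum hitting set on a path or cycle of conflicts; this you solve with a standard two-state dynamic program, branching on one edge to break cycles. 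Both approaches run in $\Oh(n)$ time. Yours is somewhat more self-contained: the paper's cycle analysis silently assumes components of size at least six (inherited from a prior application of Reduction Rule~\ref{reduc:small-components}), whereas you explicitly treat the one genuinely exceptional small component --- the triangle, where a deletion does create a new induced $P_3$. The price is the boundary bookkeeping for cycles (virtual kept edge, restricted final states), which you identify as the delicate point and resolve correctly.
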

\begin{proof}
  In the following, we construct a solution~$S$ for~$(G,k)$. If~$G$ has maximum degree~$2$, then each connected component of~$G$ is either a path or a cycle. The algorithm first deals with cycles and then considers the remaining paths. Observe that every connected component of size at most~$3$ can be solved within~$\Oh(1)$~time. For the rest of the proof we assume that every connected component has size at least~$4$.

  First, we consider each connected component~$C$ of~$G$ which is a cycle. We either transform~$C$ into one or two paths or solve~$C$ directly. First, assume that~$C$ contains three subsequent edges~$e_1, e_2$, and~$e_3$ of the same color, then edge~$e_2$ is not part of any \pt. Hence~$e_2$ can be removed without decreasing~$k$. The remaining connected component is a path and will be solved in the second step. Second, assume that~$C$ contains two subsequent edges~$e_1$ and~$e_2$ with the same color. Recall that we may assume that~$|C|\ge4$. Further, let~$e_0$ be the other edge that is incident with~$e_1$ and let~$e_3$ be the other edge that is incident with~$e_2$. According to our assumption,~$e_0$ and~$e_1$ form a \pt~and~$e_2$ and~$e_3$ form a \pt. Hence, either~$e_0\in S$ or~$e_1\in S$ and either~$e_2\in S$ or~$e_3\in S$. Since~$e_1$ and~$e_2$ have the same color and no further edges are incident with~$e_1$ and~$e_2$, we may assume that~$e_0, e_3\in S$. The remaining connected components are paths and will be solved in the second step. Third, consider the case that~$C$ contains no two subsequent edges of the same color. Then~$C$ consists of~$2\ell$ edges~$e_1, \ldots , e_{2\ell}$ and each two subsequent edges form a \pt. Thus,~$C$ contains a set of~$\ell$ edge-disjoint \pt s:~$\{e_1, e_2\}, \{ e_3, e_4\}, \ldots , \{ e_{2\ell-1}, e_{2\ell}\}$ and contains exactly~$\ell$ blue edges. Thus, deleting the $\ell$~blue edges of~$C$ is optimal.

In a second step, we consider each connected component that is a path. Let~$P_n$ be a path consisting of~$n$ vertices~$v_1, \ldots , v_n$. Visit the edges~$\{v_i,v_{i+1}\}$ for increasing~$i$ starting at~$v_1$. For each edge, check whether it is part of some \pt{}. Let~$\{v_i,v_{i+1}\}$ be the first encountered edge that is in~a~\pt{}. Then, delete~$\{v_{i+1},v_{i+2}\}$, decrease~$k$ by one, and continue with~$\{v_{i+2},v_{i+3}\}$ if it exists. First, observe that~$\{v_{i+1},v_{i+2}\}$ exists since~$\{v_i,v_{i+1}\}$ does not form a~\pt{} with~$\{v_{i-1},v_{i}\}$. Second, observe that the deletion of~$\{v_{i+1},v_{i+2}\}$ is simply an application of Reduction~Rule~\ref{reduc:bridge-conflict} and therefore correct. Clearly, this greedy algorithm runs in~$\Oh(n)$ time. 

In altogether~$\Oh(n)$ time, we can consider each cycle~$C$ and either solve~$C$ or delete one or two edges, which transforms~$C$ into one or two paths.
The greedy algorithm for paths runs in~$\Oh(n)$ time on all paths. Thus, the remaining instance can be solved in~$\Oh(n)$ time. The overall running time follows.
\end{proof}

We have now all ingredients to present the polynomial-time algorithm for graphs without monochromatic~$K_3$ and monochromatic~$P_3$. In order to prove the correctness of the algorithm and the linear running time, we make the following observation about such graphs.
\begin{lemma}\label{lem:deg-bound}
  Let~$G$ be a graph that contains no monochromatic~$K_3$ and no monochromatic~$P_3$ as
  induced subgraphs. Then, the maximum blue degree and the maximum red degree in~$G$ are
  2.
\end{lemma}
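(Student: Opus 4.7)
The plan is to prove the lemma by contradiction, exploiting the fact that both monochromatic $P_3$s and monochromatic $K_3$s are forbidden. By the obvious color symmetry of the hypothesis it suffices to bound the blue degree; the red case follows by swapping the roles of the two colors. So I will assume that some vertex $v$ has blue degree at least three, pick three distinct blue neighbors $u_1,u_2,u_3$ of $v$, and derive a contradiction from the structure induced on~$\{u_1,u_2,u_3\}$.

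First I would analyze, for each pair $u_i,u_j$, the induced subgraph on $\{v,u_i,u_j\}$. The two edges $\{v,u_i\}$ and $\{v,u_j\}$ are blue by choice. If $u_i$ and $u_j$ are non-adjacent in $G$, then $G[\{u_i,v,u_j\}]$ is an induced $P_3$ with two blue edges, i.e.\ a monochromatic blue $P_3$, contradicting the hypothesis. Hence $\{u_i,u_j\}\in E$ for every pair. Now if the edge $\{u_i,u_j\}$ is blue, then $G[\{v,u_i,u_j\}]$ is a monochromatic blue $K_3$, again a contradiction. So every edge $\{u_i,u_j\}$ with $i\neq j$ must be red.

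Applying this to all three pairs shows that $u_1,u_2,u_3$ are pairwise adjacent via red edges, so $G[\{u_1,u_2,u_3\}]$ is a monochromatic red $K_3$, contradicting the hypothesis one last time. Therefore no vertex has blue degree at least three, and the red case is identical with colors exchanged. There is no real obstacle here: the only point that requires a moment of care is that the definition of ``monochromatic induced $P_3$'' demands that the third pair be a non-edge, which is precisely what forces the pair $\{u_i,u_j\}$ to be an edge in the first step.
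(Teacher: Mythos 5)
Your proof is correct and follows essentially the same route as the paper's: three blue neighbors must be pairwise adjacent (else a blue $P_3$), the connecting edges cannot be blue (else a blue $K_3$ through $v$), and then the forced all-red triangle is a red $K_3$ --- the paper merely phrases the final contradiction as exhibiting a blue edge among the neighbors rather than a red triangle. No gaps.
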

\begin{proof}
  We show the proof only for the blue degree, the bound for the red degree can be shown
  symmetrically.
  Assume towards a contradiction that~$G$ contains a vertex~$t$ with at least three blue
  neighbors~$u$,~$v$, and~$w$. Since~$G$ contains no blue~$P_3$, the
  subgraph~$G[\{u,v,w\}]$ has three edges. Moreover, since~$G$ contains no
  monochromatic~$K_3$ not all of the three edges in~$G[\{u,v,w\}]$ are red. Assume without
  loss of generality that~$\{u,v\}$ is blue. Then~$G[\{u,v,t\}]$
  is a blue~$K_3$, a contradiction.
\end{proof}

\begin{figure}[t]
  \begin{center}
    \begin{tikzpicture}[scale=0.6]
      \tikzstyle{knoten}=[circle,fill=white,draw=black,minimum size=5pt,inner sep=0pt]

      \begin{scope}

        \node[knoten] (1) at (0,0) {};
        \node[knoten] (2) at (2,0) {};
        \node[knoten] (3) at (2,2) {};
        \node[knoten] (4) at (0,2) {};
        \node at (1,-0.7) {paw};
        
        \draw[-] (1) edge (2);
        \draw[-] (1) edge (3);
        \draw[-] (2) edge (3);
        \draw[-] (1) edge (4);
        
    \begin{scope}[xshift=4cm]
        \node[knoten] (1) at (0,0) {};
        \node[knoten] (2) at (2,0) {};
        \node[knoten] (3) at (2,2) {};
        \node[knoten] (4) at (0,2) {};
        \node at (1,-0.7) {claw};
        
        \draw[-] (1) edge (2);
        \draw[-] (1) edge (3);
        \draw[-] (1) edge (4);
        
    \end{scope}

    \begin{scope}[xshift=8cm]
        \node[knoten] (1) at (0,0) {};
        \node[knoten] (2) at (2,0) {};
        \node[knoten] (3) at (2,2) {};
        \node[knoten] (4) at (0,2) {};
        \node at (1,-0.7) {$K_4$};
        
        \draw[-] (1) edge (2);
        \draw[-] (1) edge (3);
        \draw[-] (1) edge (4);
        \draw[-] (3) edge (2);
        \draw[-] (2) edge (4);
        \draw[-] (3) edge (4);
        
    \end{scope}
    
    \begin{scope}[xshift=12cm]
        \node[knoten] (1) at (0,0) {};
        \node[knoten] (2) at (2,0) {};
        \node[knoten] (3) at (2,2) {};
        \node[knoten] (4) at (0,2) {};
        \node at (1,-0.7) {diamond};
        
        \draw[-] (1) edge (2);
        \draw[-] (1) edge (3);
        \draw[-] (1) edge (4);
        \draw[-] (2) edge (3);
        \draw[-] (3) edge (4);
        
    \end{scope}

  \end{scope}        
\end{tikzpicture}
\end{center}
\caption{The (uncolored) small graphs used in this work.}
\label{Figure: claw and paw}
\end{figure}

\begin{theorem}
\label{theo:no monochromatic subgraph linear time}
  \BPDs{} can be solved in~$\Oh(n)$ time if~$G$ contains no monochromatic~$K_3$ and no monochromatic~$P_3$.
\end{theorem}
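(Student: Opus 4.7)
The plan is to combine Lemma~\ref{lem:deg-bound} with Lemmas~\ref{reduc:small-components} and~\ref{reduc:bridge-conflict} to reduce to the maximum-degree-two case, which Lemma~\ref{lem:deg-two} handles in linear time. By Lemma~\ref{lem:deg-bound}, every vertex has blue degree at most two and red degree at most two, so the maximum degree in~$G$ is at most four and, in particular,~$m\le 2n$. I would first apply Reduction Rules~\ref{reduc:small-components} and~\ref{reduc:bridge-conflict} exhaustively. The goal is to show that the reduced graph has maximum degree at most two, after which Lemma~\ref{lem:deg-two} finishes in~$\Oh(n)$ time.

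The core of the argument is the case analysis suggested by Figure~\ref{Figure: Proof monochromatic in linear time}. Assume towards a contradiction that the reduced graph still contains a vertex~$v$ of degree three or four. Using the forbidden monochromatic~$P_3$s and~$K_3$s together with the degree bounds, I would show that~$G[N[v]]$ must be one of the three configurations in the figure: the degree-four configuration (panel (a)), a diamond (panel (b)), or a paw (panel (c)). For instance, if~$v$ has blue neighbors~$u_1, u_2$, then~$\{u_1,u_2\}\in E$ (otherwise~$u_1 v u_2$ is a blue~$P_3$) and~$\{u_1,u_2\}$ is red (otherwise~$G[\{v,u_1,u_2\}]$ is a blue~$K_3$); symmetric arguments fix the colors of the remaining neighborhood edges and similarly reduce the degree-three cases to panels (b) and (c).

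Next, for each of the three configurations I would extend the analysis one step further to locate an edge whose deletion falls under one of the reduction rules. Concretely, I would identify an edge~$e$ incident with a neighbor of~$v$ that is a bridge of~$G$ and forms a~\pt{} with another edge, while the component of~$G-e$ on the side of~$v$ has at most five vertices. This yields a component handled by Reduction Rule~\ref{reduc:small-components}(b) or an application of Reduction Rule~\ref{reduc:bridge-conflict}, contradicting exhaustiveness. The forced extensions remain bounded because each remaining neighbor of a neighbor of~$v$ again has blue degree at most two and red degree at most two, so the case analysis terminates after examining vertices within distance two of~$v$. The main obstacle is this enumeration: one must carefully inspect, for each of the three panels, how the pending edges of~$u_1, u_2, w_1, w_2$ (resp.~$u,w,t$) can be completed without creating a monochromatic~$P_3$ or~$K_3$, and verify that in every remaining scenario a reducible bridge configuration appears.

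Finally, for the running time, because~$m\le 2n$ we have~$\Oh(n+m)=\Oh(n)$; computing all~\pt{}s, bridges, and neighborhoods needed to drive the reductions and the structural case analysis can be performed locally, spending constant time per vertex. After reduction, Lemma~\ref{lem:deg-two} provides an~$\Oh(n)$-time solver, yielding the claimed overall~$\Oh(n)$ running time.
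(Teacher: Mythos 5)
Your proposal follows essentially the same route as the paper's proof: apply Lemma~\ref{lem:deg-bound} to bound the degree by four, use Reduction Rules~\ref{reduc:small-components} and~\ref{reduc:bridge-conflict} to eliminate the three neighborhood configurations of Figure~\ref{Figure: Proof monochromatic in linear time} (small components for the degree-four and diamond cases, a bridge for the paw case), and finish with Lemma~\ref{lem:deg-two}; your locality argument for the $\Oh(n)$ running time matches the paper's refined analysis of the two rules under the bounded-degree assumption. The case analysis you defer is exactly the content the paper works out, and your outline of it is correct.
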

\begin{proof}
  The algorithm first applies Reduction Rule~\ref{reduc:small-components} exhaustively. Afterwards, Reduction Rule~\ref{reduc:bridge-conflict} is applied on all bridges~$\{u,v\}$ with~$\deg(u)\geq 3$ or~$\deg(v) \geq 3$. Thus, let~$G$ be the graph after the applications of Reduction Rules~\ref{reduc:small-components} and~\ref{reduc:bridge-conflict} as described above. We show that~$G$ has maximum degree at most~$2$. Afterwards, Lemma~\ref{lem:deg-two} applies and the remaining instance can be solved in~$\Oh(n)$ time. Observe that by
  Lemma~\ref{lem:deg-bound}, the maximum degree in~$G$ is~$4$. For an illustration of the small uncolored graphs used in this proof see Figure~\ref{Figure: claw and paw}.

  First, assume that the maximum degree of~$G$ is~$4$ and let~$v$ be a vertex of degree~$4$. 
  We show that~$N[v]$ is a connected component of~$G$. This implies that~$N[v]$ is removed by Reduction
  Rule~\ref{reduc:small-components} in this case. By Lemma~\ref{lem:deg-bound},~the vertex~$v$ has exactly two
  blue neighbors~$u_1$,~$u_2$ and exactly two red neighbors~$w_1$,~$w_2$. Since~$G$ contains no
  monochromatic~$P_3$ and no monochromatic~$K_3$,~$\{ u_1, u_2\}$ is red and~$\{ w_1,w_2\}$ is blue. Now assume towards a
  contradiction, that one of these four vertices has a neighbor~$t\notin N[v]$
  in~$G$.

\begin{figure}
\begin{center}
\begin{tikzpicture}[scale=0.7]
\tikzstyle{knoten}=[circle,fill=white,draw=black,minimum size=5pt,inner sep=0pt]
\tikzstyle{bez}=[inner sep=0pt]
		\node[knoten] (v)[label=above:{$v$}] at (1.6,1) {};
		\node[knoten] (u1)[label=left:{$u_1$}] at (0,0) {};
		\node[knoten] (u2)[label=right:{$u_2$}] at (1,0) {};
		\node[knoten] (w1)[label=below:{$w_1$}] at (2.2,0) {};
		\node[knoten] (w2)[label=below:{$w_2$}] at (3.2,0) {};
		\node[knoten] (t)[label=below:{$t$}] at (0.5,-1) {};
		\draw[-,color=blue,very thick] (v) to (u1);
		\draw[-,color=blue,very thick] (v) to (u2);
		\draw[-,color=blue,very thick] (t) to (u2);
		\draw[-,color=blue,very thick] (w1) to (w2);
		\draw[-,color=myred] (v) to (w1);
		\draw[-,color=myred] (v) to (w2);
		\draw[-,color=myred] (u1) to (u2);
		\draw[-,color=myred] (u1) to (t);
		
		\node[knoten] (t)[label=above:{$t$}] at (6.75,0.75) {};
		\node[knoten] (v)[label=left:{$v$}] at (6,0) {};
		\node[knoten] (u)[label=right:{$u$}] at (7.5,0) {};
		\node[knoten] (w)[label=270:{$w$}] at (6.75,-0.75) {};
		\draw[-,color=blue,very thick] (v) to (u);
		\draw[-,color=blue,very thick] (v) to (t);
		\draw[-,color=blue,very thick] (w) to (u);
		\draw[-,color=myred] (u) to (t);
		\draw[-,color=myred] (v) to (w);
		
		\node[knoten] (u)[label=left:{$u$}] at (10,0.75) {};
		\node[knoten] (v)[label=above:{$v$}] at (11.5,0) {};
		\node[knoten] (w)[label=left:{$w$}] at (10,-0.75) {};
		\node[knoten] (t)[label=above:{$t$}] at (12.5,0) {};
		\draw[-,color=blue,very thick] (v) to (u);
		\draw[-,color=blue,very thick] (v) to (w);
		\draw[-,color=myred] (v) to (t);
		\draw[-,color=myred] (u) to (w);


		\node[bez] (a) at (-1,1.5) {(a)};
		\node[bez] (b) at (5,1.5) {(b)};
		\node[bez] (c) at (8.7,1.5) {(c)};
\end{tikzpicture}
\end{center}
\caption{Subgraphs used in the proof of Theorem~\ref{theo:no monochromatic subgraph linear time} (a)~$|N(v)|=4$, (b)~$G[N[v]]$ is a diamond and (c)~$G[N[v]]$ is a paw.} \label{Figure: Proof monochromatic in linear time}
\end{figure}
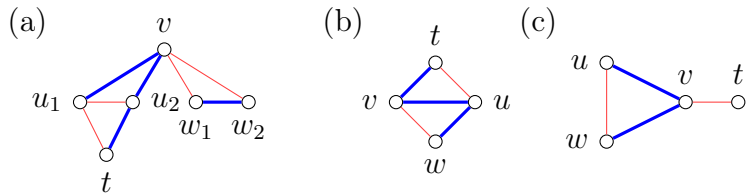  
   Without loss of generality assume that this vertex is~$u_1$. See Figure~\ref{Figure: Proof monochromatic in linear time} (a) for an example. Then,~$\{ u_1,t\}$
  is red because otherwise~$G[\{ u_1,v,t\}]$ is a
  blue~$P_3$. This implies that~$\{ u_2,t\}$ is blue because
  otherwise~$G[\{ u_1,u_2,t\}]$ is a red~$P_3$ or a red~$K_3$. Then,
  however,~$G[\{ u_2,v,t\}]$ is a blue~$P_3$, a contradiction. Altogether, this implies that~$N[v]$ is
  a connected component of~$G$.
  Hence, if~$G$ is reduced with respect to Reduction~Rule~\ref{reduc:small-components},
  then~$G$ contains no vertices of degree~$4$.

  Second, assume that the maximum degree of~$G$ is~$3$ and let~$v$ be a vertex of degree~$3$. 
  If~$G[N[v]]$ is a~$K_4$, then $N[v]$ is a connected component of~$G$ and is
  removed by Reduction~Rule~\ref{reduc:small-components}. Next, assume~$G[N[v]]$ is a
  diamond, and let~$t$, $u$, and~$w$ denote the neighbors of~$v$ where~$u$ is the other
  vertex that has degree three in~$G[N[v]]$. See Figure~\ref{Figure: Proof monochromatic in linear time} (b) for an example. Assume without loss of generality that~$u$ is
  a blue neighbor of~$v$. Then, by Lemma~\ref{lem:deg-bound}, one of~$t$ and~$w$, say~$w$, is a red neighbor of~$v$. This
  implies that~$t$ is a blue neighbor of~$v$, because otherwise~$G[\{ v,w,t\}]$ is a
  red~$P_3$. Consequently,~$t$ is a red neighbor of~$u$ because otherwise~$t$,~$v$,
  and~$u$ form a blue~$K_3$. Hence,~$w$ is a blue neighbor of~$u$ because
  otherwise~$G[\{ u,w,t\}]$ is a red~$P_3$. Altogether, we have that~$t$ and~$w$ each
  have a red and a blue neighbor in~$\{u,v\}$. Since~$N[v]=N[u]$, we have that~$t$ and~$w$
  cannot have a neighbor in~$V\setminus N[v]$ as such a neighbor would form a
  monochromatic~$P_3$ with one of~$u$ and~$v$. Hence,~$N[v]$ is a connected component
  of~$G$ in this case and consequently removed by Reduction
  Rule~\ref{reduc:small-components}.

  Thus~$G[N[v]]$ is neither a~$K_4$ nor a diamond if~$G$ is reduced with respect
  to~Reduction Rule~\ref{reduc:small-components}. Moreover,~$G[N[v]]$ cannot
  be a claw since in this case~$G$ contains a monochromatic~$P_3$. Hence, the only
  remaining case is that~$G[N[v]]$ is a paw. Let~$t$,~$u$, and~$w$ be the neighbors of~$v$
  where~$u$ and~$w$ are adjacent. For an example see Figure~\ref{Figure: Proof monochromatic in linear time}~(c). Assume furthermore without loss of generality that~$v$
  is incident with two blue edges. This implies that~$t$ is a red neighbor of~$v$ as otherwise~$t$, $v$, and~$u$ form a monochromatic~$P_3$. Also,~$u$
  and~$w$ are blue neighbors of~$v$. Consequently,~$\{ u,w\}$ is red. As in the proof above for the case that~$G[N[v]]$ is a diamond,~$u$ and~$w$ have
  no further neighbor in~$G$. Thus,~$\{v,t\}$ is a bridge with~$\deg(v) = 3$ that fulfills the condition of
  Reduction Rule~\ref{reduc:bridge-conflict}~b) and thus~$\{v,t\}$ is removed by this
  rule. Altogether this implies that any instance to which
  Reduction Rules~\ref{reduc:small-components} and~\ref{reduc:bridge-conflict} have been applied as described above has maximum degree~$2$. 
  By Lemma~\ref{lem:deg-two}, we can thus solve the remaining instance in linear time.

Next, we consider the running times of Reduction Rules~\ref{reduc:small-components} and~\ref{reduc:bridge-conflict} in more detail since for both rules the running time analysis given above did \emph{not} assume that~$G$ contains no monochromatic~$P_3$ and no monochromatic~$K_3$.

First, we apply Reduction Rule~\ref{reduc:small-components} exhaustively. Since~$G$ has maximum degree at most four, we can label all vertices that are part of some \pt{} in~$\Oh(n)$ time and thus Reduction Rule~\ref{reduc:small-components} can be applied exhaustively in~$\Oh(n)$ time. Observe that in the resulting graph the maximum degree of~$G$ is three since vertices of degree~four are in connected components of size~five.

Next, we consider the running time of Reduction Rule~\ref{reduc:bridge-conflict}, after Reduction Rule~\ref{reduc:small-components} was applied exhaustively. Recall that Reduction Rule~\ref{reduc:bridge-conflict} is only applied to bridges that have at least one endpoint with degree three. To apply the rule exhaustively, we first compute in~$\Oh(n)$ time the set of all vertices of degree three. For each such vertex~$v$, the graph~$G[N[v]]$ is a paw because otherwise,~$N[v]$ is a connected component of constant size as shown above. Hence, there exist two vertices~$u, w\in N(v)$ such that~$N(u)=\{ v,w\}$ and~$N(w)=\{ u,v\}$. The vertices~$u$ and~$w$ can be determined in~$\Oh(1)$ time. Let~$t\in N(v)\setminus\{ u,w\}$. Then, Reduction Rule~\ref{reduc:bridge-conflict} removes~$\{v,t\}$ from~$G$. Thus, in~$\Oh(1)$ time, we may apply Reduction Rule~\ref{reduc:bridge-conflict} on the bridge containing~$v$. Consequently, the rule can be applied exhaustively on all degree-three vertices~$\Oh(n)$ time. Afterwards, the remaining instance has maximum degree two and can be solved in~$\Oh(n)$ time. Hence, the overall running time is $\Oh(n)$.
\end{proof}

\section{Parameterized Complexity} \label{Section: param}

In this section we study the parameterized complexity of \BPDs~parameterized by~$k$, $\ell:=m-k$, and~$(k,\Delta)$, where~$\Delta$ denotes the maximum degree of~$G$. We first provide an~$\Oh(1.84^k\cdot nm)$-time fixed-parameter algorithm for \BPDs. Afterwards, we study problem kernelizations for \BPDs~parameterized by~$(k,\Delta)$ and~$\ell$.


\subsection{ A Fixed-Parameter Algorithm for Bicolored~$\mathbf{P_3}$~Deletion}

We now provide a fixed-parameter algorithm that solves \BPDs~parameterized by~$k$. Note that there is a naive~$\Oh(2^k \cdot nm)$ branching algorithm for \BPDs: For a given instance~$(G,k)$, check in~$\Oh(nm)$ time if~$G$ contains a \pt. If this is not the case, then answer yes. Otherwise, answer no if~$k<1$. If~$k\ge 1$, then compute a \pt~formed by the edges~$e_1$ and~$e_2$ and branch into the cases~$(G-e_1,k-1)$ and~$(G-e_2,k-1)$. We modify this simple algorithm by branching on slightly more complex structures, obtaining a running time of~$\Oh(1.84^k \cdot nm)$. Note that by Corollary~\ref{Corollary: ETH lower bounds} a subexponential algorithm in~$k$ is not possible when assuming the ETH.

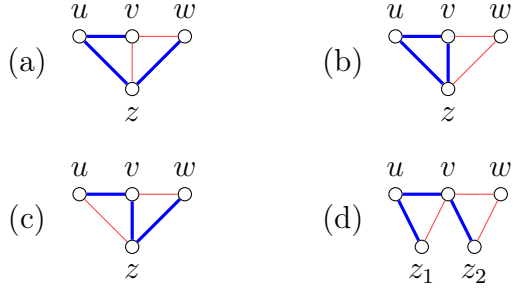
\begin{figure}
\begin{center}
\begin{tikzpicture}[scale=0.7]
\tikzstyle{knoten}=[circle,fill=white,draw=black,minimum size=5pt,inner sep=0pt]
\tikzstyle{bez}=[inner sep=0pt]
		\node[knoten] (u1)[label=above:{$u$}] at (-1,5) {};
		\node[knoten] (v1)[label=above:{$v$}] at (0,5) {};
		\node[knoten] (w1)[label=above:{$w$}] at (1,5) {};
		\node[knoten] (z1)[label=270:{$z$}] at (0,4) {};
		\draw[-,color=blue,very thick] (u1) to (v1);
		\draw[-,color=myred] (v1) to (w1);
		\draw[-,color=blue,very thick] (u1) to (z1);
		\draw[-,color=myred] (v1) to (z1);
		\draw[-,color=blue,very thick] (w1) to (z1);
		
		\node[knoten] (u2)[label=above:{$u$}] at (5,5) {};
		\node[knoten] (v2)[label=above:{$v$}] at (6,5) {};
		\node[knoten] (w2)[label=above:{$w$}] at (7,5) {};
		\node[knoten] (z2)[label=270:{$z$}] at (6,4) {};
		\draw[-,color=blue,very thick] (u2) to (v2);
		\draw[-,color=myred] (v2) to (w2);
		\draw[-,color=blue,very thick] (u2) to (z2);
		\draw[-,color=blue,very thick] (v2) to (z2);
		\draw[-,color=myred] (w2) to (z2);
		
		\node[knoten] (u3)[label=above:{$u$}] at (-1,2) {};
		\node[knoten] (v3)[label=above:{$v$}] at (0,2) {};
		\node[knoten] (w3)[label=above:{$w$}] at (1,2) {};
		\node[knoten] (z3)[label=270:{$z$}] at (0,1) {};
		\draw[-,color=blue,very thick] (u3) to (v3);
		\draw[-,color=myred] (v3) to (w3);
		\draw[-,color=myred] (u3) to (z3);
		\draw[-,color=blue,very thick] (v3) to (z3);
		\draw[-,color=blue,very thick] (w3) to (z3);
		
		\node[knoten] (u4)[label=above:{$u$}] at (5,2) {};
		\node[knoten] (v4)[label=above:{$v$}] at (6,2) {};
		\node[knoten] (w4)[label=above:{$w$}] at (7,2) {};
		\node[knoten] (z41)[label=270:{$z_1$}] at (5.5,1) {};
		\node[knoten] (z42)[label=270:{$z_2$}] at (6.5,1) {};
		\draw[-,color=blue,very thick] (u4) to (v4);
		\draw[-,color=myred] (v4) to (w4);
		\draw[-,color=blue,very thick] (u4) to (z41);
		\draw[-,color=myred] (v4) to (z41);
		\draw[-,color=blue,very thick] (v4) to (z42);
		\draw[-,color=myred] (w4) to (z42);


		\node[bez] (a) at (-2,4.5) {(a)};
		\node[bez] (b) at (4,4.5) {(b)};
		\node[bez] (a) at (-2,1.5) {(c)};
		\node[bez] (b) at (4,1.5) {(d)};
\end{tikzpicture}
\end{center}
\caption{The graphs used in the proof of Theorem~\ref{Theorem: Branching Algorithm} (a) LC-Diamond, (b) LO-Diamond, (c) IIZ-Diamond and (d) CC-Hourglass. The names refer to the shapes of the connected components of both colors. Due to symmetry we use the same names if we swap the red and blue colors of the edges in each graph.} \label{Figure: Branching Subgraphs}
\end{figure}

The basic idea of the algorithm is to branch on LC-Diamonds, LO-Diamonds, IIZ-Diamonds and CC-Hourglasses. For the definition of these structures see Figure~\ref{Figure: Branching Subgraphs}. We say that a graph~$G$ is \emph{nice} if~$G$ has none of the structures from Figure~\ref{Figure: Branching Subgraphs} as induced subgraph and every edge of~$G$ forms a \pt~with at most one other edge of~$G$. We give a polynomial-time algorithm that solves \BPDs~when the input graph is nice. To this end consider the following proposition.

\begin{proposition} \label{Proposition: No more conflicts after branching}
Let~$(G=(V,E_r,E_b),k)$ be an instance of~\BPDs{} such that~$G$ is nice. Moreover, let~$p$ be the number of~\pt s in~$G$. Then, for every two edges~$e_1$ and~$e_2$ forming a \pt~in~$G$ there is an edge~$e \in \{e_1,e_2\}$ such that
\begin{enumerate}
\item[a)] $G-e$ contains~$p-1$ \pt s and every~\pt~of~$G-e$ is a~\pt~in~$G$, and
\item[b)] $G-e$ is nice.
\end{enumerate}

\end{proposition}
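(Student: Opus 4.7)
The plan is to prove the proposition by contradiction: if neither $e_1$ nor $e_2$ satisfies both~(a) and~(b), then $G$ must contain one of the four forbidden induced subgraphs or have an edge in two \pt s, contradicting niceness. Write $e_1 = \{u,v\}$ and $e_2 = \{v,w\}$ and assume $e_1$ is blue and $e_2$ is red (the swapped case is symmetric). Since $G$ is nice, $e_1$ and $e_2$ each lie in exactly the \pt{} $(u,v,w)$, so in particular $\{u,w\}\notin E$.

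The first observation is that deleting an edge $\{x,y\}$ creates a new \pt{} precisely when there is a common neighbour $z$ of $x,y$ with $\{x,z\}$ and $\{y,z\}$ of different colours; call such a $z$ a \emph{witness} for $\{x,y\}$ and call the edge \emph{safe} if it has no witness. A safe $e\in\{e_1,e_2\}$ destroys the unique \pt{} containing $e$ (namely $(u,v,w)$) and creates none, immediately giving part~(a). To show that at least one of $e_1,e_2$ is safe, assume both have witnesses $z_1,z_2$; note $z_1\neq w$ and $z_2\neq u$ because $\{u,w\}$ is a non-edge. If $z_1=z_2=:z$, then $z$ is adjacent to each of $u,v,w$ and $G[\{u,v,w,z\}]$ is a diamond; the two colour-mismatch constraints at $z$ force $\{u,z\}$ and $\{w,z\}$ to share a colour opposite to $\{v,z\}$, so the induced subgraph is an LC-Diamond (up to colour swap), contradicting niceness. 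If $z_1\neq z_2$, I examine the six forced edges on $\{u,v,w,z_1,z_2\}$ and branch on which of $\{u,z_2\}$, $\{w,z_1\}$, $\{z_1,z_2\}$ also belong to $E$: each configuration either matches LC-, LO-, or IIZ-Diamond on four of the vertices, matches the CC-Hourglass on all five, or forces an additional \pt{} incident to $e_1$ or $e_2$, violating Condition~2 of niceness. In every case niceness of $G$ is contradicted, so at least one of $e_1,e_2$ is safe.

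For part~(b), Condition~2 of niceness carries over to $G-e$ for free: since no new \pt{} is created, every edge's \pt-count is monotone non-increasing, hence remains at most one. For Condition~1, I would argue that any forbidden induced subgraph newly appearing in $G-e$ lifts in $G$ to the same vertex set with the extra edge $e$; enumerating the four forbidden patterns together with the possible positions of $e$ inside each, the resulting ``$+e$''-graph either already contains a forbidden pattern in $G$, or produces an additional \pt{} through $e$ or one of its incident edges (so some edge of $G$ would be in two \pt s), each contradicting the niceness of $G$.

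The main obstacle is the $z_1\neq z_2$ subcase above: extra edges among $\{u,z_2\}$, $\{w,z_1\}$, $\{z_1,z_2\}$ may deform the prospective CC-Hourglass, so the case analysis must simultaneously invoke Condition~1 to forbid each of the three diamonds on four-vertex sub-selections and Condition~2 to rule out the configurations in which $e_1$ or $e_2$ would acquire a second \pt. The check for part~(b) is mechanical by comparison but still requires enumerating all one-edge completions of each of the four forbidden subgraphs and verifying that every such completion is already incompatible with niceness of $G$.
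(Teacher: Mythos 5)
Your proposal follows essentially the same route as the paper's proof: assume both $e_1$ and $e_2$ would create a new bicolored~$P_3$ upon deletion, extract the witnesses $z_1,z_2$, derive an LC-Diamond when $z_1=z_2$, and otherwise force a CC-Hourglass or one of the other forbidden diamonds, with part~(b) handled by lifting any new forbidden subgraph of $G-e$ back to $G$ together with the deleted edge $e$. The case analysis you defer for $z_1\neq z_2$ is exactly what the paper carries out, and it closes as you predict; the one step worth making explicit is that the colors of $\{v,z_1\}$ and $\{v,z_2\}$ are pinned down first via Condition~2 of niceness (if $\{v,z_1\}$ had the color of $e_1$, then $\{z_1,w\}$ must be an edge because $e_2$ lies in only one bicolored~$P_3$, which yields an LC- or IIZ-Diamond), and only then do the remaining six edges determine a CC-Hourglass up to the extra-edge subcases you list.
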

\begin{proof} For the proof of Statement~\textit{a)}, let~$u$ and~$v$ denote the endpoints of~$e_1 \in E_b$ and let~$v$ and~$w$ denote the endpoints of~$e_2 \in E_r$. Note that the number of \pt s in~$G-e_1$ and~$G-e_2$ is at least~$p-1$ since every edge of~$G$ is part of at most one \pt~since~$G$ is nice. It remains to show that there is an edge~$e \in \{e_1, e_2\}$ such that the number of \pt s in~$G-e$ is at most~$p-1$, and that every~\pt~in~$G-e$ is a~\pt~in~$G$. Assume towards a contradiction that there are at least~$p$ \pt s in~$G-e_1$ and in~$G-e_2$. Then, there exist vertices~$z_1,z_2 \in V$, such that~$\{u,z_1\}$ and~$\{z_1,v\}$ form a \pt~in~$G-e_1$ and also~$\{v,z_2\}$ and~$\{z_2,w\}$ form a \pt~in~$G-e_2$.

First, assume~$z_1 = z_2 =: z$. If~$\{v,z\} \in E_r$, then~$\{u,z\} \in E_b$, $\{w,z\}\in E_b$, and~$G[\{u,v,w,z\}]$ is an LC-Diamond with two red and three blue edges. This contradicts the fact that~$G$ contains no induced LC-Diamond. Analogously, if~$\{v,z\} \in E_b$, then~$G[\{u,v,w,z\}]$ is an LC-Diamond with two blue and three red edges. We conclude~$z_1 \neq z_2$.

Second, assume~$\{v,z_1\} \in E_b$. Then, $\{u,z_1\} \in E_r$. Since every edge of~$G$ is part of at most one~\pt, there is an edge~$\{z_1,w\} \in E$. If~$\{z_1, w\} \in E_r$, then~$G[\{u,v,w,z_1\}]$ is an LC-Diamond. Otherwise, if~$\{z_1,w\} \in E_b$, then~$G[\{u,v,w,z_1\}]$ is an IIZ-Diamond. This contradicts the fact that~$G$ contains no induced LC- or IIZ-Diamond. Therefore,~$\{v,z_1\} \in E_r$. With the same arguments we can show that~$\{v,z_2\} \in E_b$.

From~$\{v,z_1\} \in E_r$ and~$\{v,z_2\} \in E_b$ we conclude that~$\{u,z_1\} \in E_b$ and~$\{w,z_2\} \in E_r$. Next, if we have~$E(\{u,v,w,z_1,z_2\})=\{ \{u,v\}, \{u,z_1\}, \{v,w\},\{v,z_1\},\{v,z_2\},\{w,z_2\}\}=: A$, then the graph~$G[\{u,v,w,z_1,z_2\}]$ is an induced CC-Hourglass which contradicts the fact that~$G$ does not contain induced CC-Hourglasses. Hence, $E(\{u,v,w,z_1,z_2\}) \supsetneq A$. Consider the following cases.

\textbf{Case 1:} $\{u,w\} \in E$. Then~$\{u,v\}$ and~$\{v,w\}$ do not form a \pt, which contradicts the choice of~$\{u,v\}$ and~$\{v,w\}$.

\textbf{Case 2:} $\{z_1,w\} \in E$. If~$\{z_1,w\} \in E_b$, then~$G[\{u,v,w,z_1\}]$ is an induced LC-Diamond. Otherwise, if~$\{z_1,w\} \in E_r$, then~$G[\{u,v,w,z_1\}]$ is an induced LO-Diamond. Both cases contradict the fact that~$G$ is nice.

\textbf{Case 3:} $\{z_2,u\} \in E$. If~$\{z_2,u\} \in E_r$, then~$G[\{u,v,w,z_2\}]$ is an induced LC-Diamond. Otherwise, if~$\{z_1,w\} \in E_b$, then~$G[\{u,v,w,z_2\}]$ is an induced LO-Diamond. Both cases contradict the fact that~$G$ is nice.

\textbf{Case 4:} $\{z_1,z_2\} \in E$. If~$\{z_1,z_2\} \in E_b$, then~$G[\{z_1,z_2,w,v\}]$ is an LC-Diamond. Otherwise, if~$\{z_1,z_2\} \in E_r$, then~$G[\{z_1,z_2,u,v\}]$ is an LC-Diamond. Both cases contradict the fact that~$G$ is nice.

All cases lead to a contradiction. Hence, there exists~$e \in \{e_1, e_2\}$ such that~$G-e$ contains~$p-1$ \pt s which proves Statement~\textit{a)}.

 Next, we show Statement~\textit{b)}. To this end, let~$e_1$ and~$e_2$ be two edges forming a bicolored~$P_3$ in~$G$. Let~$e \in \{e_1,e_2\}$ that satisfies~\textit{a)}. We show that~$G-e$ is nice. From~\textit{a)} we know that every~\pt{} of~$G-e$ is also a~\pt{} in~$G$. Hence, the fact that every edge of~$G$ is part of at most one \pt~implies that every edge of~$G-e$ is part of at most one \pt.

First, assume towards a contradiction that~$G-e$ contains an induced LC-, LO- or IIZ-Diamond~$(G-e)[\{u,v,w,z\}]$ as given in Figure~\ref{Figure: Branching Subgraphs}. Since~$G$ contains no such structure, we conclude~$e=\{u,w\}$ and $\{u,v,w,z\}$ is a clique in~$G$. Then, deleting~$e$ from~$G$ produces a new \pt~on edges~$\{u,v\}$ and~$\{v,w\}$ in~$G-e$ which contradicts Statement~\textit{a)}. Therefore,~$G-e$ contains no induced LC-, LO- and IIZ-Diamonds.

Second, assume towards a contradiction that the graph~$G-e$ contains some induced CC-Hour\-glass~$(G-e)[\{u,v,w,z_1,z_2\}]$ as given in Figure~\ref{Figure: Branching Subgraphs}. Then, since~$G$ does not contain an induced CC-Hourglass, both endpoints of~$e$ are elements of~$\{u,v,w,z_1,z_2\}$ and~$G[\{u,v,w,z_1,z_2\}]$ contains exactly seven edges.

\textbf{Case 1:} $e=\{u,w\}$ (or~$e=\{z_1,z_2\}$). Then,~$G-e$ contains the new \pt~formed by the edges~$\{u,v\}$ and~$\{u,w\}$ (by $\{z_1,v\}$ and $\{v,z_2\}$, respectively) which contradicts~\textit{a)}.

\textbf{Case 2:} $e=\{z_1,w\}$. Then, $G[\{u,v,w,z_1\}]$ is an induced LC-Diamond in~$G$ if~$e \in E_b$ or an induced LO-Diamond in~$G$ if~$e \in E_r$ which contradicts the fact that~$G$ has no induced LC-Diamonds and LO-Diamonds.

\textbf{Case 3:} $e=\{z_2,u\}$ Then, $G[\{u,v,w,z_2\}]$ is an induced LO-Diamond in~$G$ if~$e \in E_b$ or an induced LC-Diamond in~$G$ if~$e \in E_r$ which contradicts the fact that~$G$ has no induced LC-Diamonds and LO-Diamonds.

All cases lead to a contradiction and therefore~$G-e$ contains no induced LC-, LO-, IIZ-Diamonds, CC-Hourglasses and every edge of~$G-e$ is part of at most one \pt. 
\end{proof}

Proposition~\ref{Proposition: No more conflicts after branching} implies a simple algorithm for \BPDs~on such graphs.

\begin{corollary} \label{Corollary: Solve after Branching}
Let~$(G,k)$ be an instance of \BPDs~where~$G$ is nice. Then, we can decide in~$\Oh(nm)$ time whether~$(G,k)$ is a yes- or a no-instance of \BPDs.
\end{corollary}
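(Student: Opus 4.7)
The plan is to show that for a nice instance~$(G,k)$, the minimum number of edge deletions needed to destroy all~\pt s equals exactly $p$, the number of~\pt s present in~$G$. Once this equality is established, the decision reduces to counting~$p$ and comparing it to~$k$, so the algorithm is just: compute~$p$ and return yes if and only if $p\le k$.

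For the lower bound I would invoke the second niceness condition directly: every edge of~$G$ lies in at most one~\pt. Consequently, deleting any single edge eliminates at most one~\pt, so any solution requires at least~$p$ deletions. For the upper bound I would iterate Proposition~\ref{Proposition: No more conflicts after branching}. Starting from $G_0 := G$ with~$p$~\pt s, at step~$i$ pick an arbitrary~\pt~formed by edges~$e_1,e_2$ in~$G_{i-1}$; by part~(a) there is a choice $e\in\{e_1,e_2\}$ such that $G_i := G_{i-1}-e$ has exactly $p-i$~\pt s, and by part~(b) the graph~$G_i$ is again nice, so the induction goes through. After~$p$ steps the graph is~\pt-free, giving a solution of size exactly~$p$. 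Hence~$(G,k)$ is a yes-instance if and only if $p\le k$.

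It remains to compute~$p$ within the claimed time. I would iterate over every vertex~$v$ and, for every pair of neighbors $u,w$ of~$v$, check in $\Oh(1)$ time whether $\{u,v\}\in E_b$, $\{v,w\}\in E_r$ (or the swapped assignment of colors), and $\{u,w\}\notin E$, using a hash set of edges. The total work is $\Oh(\sum_v \deg(v)^2)\le \Oh(\Delta\cdot \sum_v \deg(v))=\Oh(nm)$ using $\Delta\le n$ and $\sum_v\deg(v)=2m$. Since essentially all the combinatorial work has already been done in Proposition~\ref{Proposition: No more conflicts after branching}, I do not foresee a real obstacle; the only point to verify is that the reduction to a counting question rests entirely on the two clauses of niceness, namely that each edge lies in at most one~\pt~(giving the lower bound) and that the four forbidden patterns from Figure~\ref{Figure: Branching Subgraphs} are excluded (ensuring the inductive step of the upper bound via Proposition~\ref{Proposition: No more conflicts after branching}).
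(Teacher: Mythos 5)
Your proposal is correct and follows essentially the same route as the paper: the lower bound from edge-disjointness of the \pt s (each edge lies in at most one \pt), the upper bound by iterating Proposition~\ref{Proposition: No more conflicts after branching} parts a) and b) to delete one edge per \pt{} without creating new ones, and the resulting algorithm of counting \pt s and comparing with~$k$. Your explicit $\Oh(\sum_v \deg(v)^2)$ enumeration argument is a slightly more detailed justification of the $\Oh(nm)$ bound than the paper gives, but the content is the same.
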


\begin{proof}
We solve \BPDs~with the following algorithm: First, enumerate all \pt s in  $\Oh(nm)$~time. Second, check if there are at most~$k$ \pt s. If yes,~$(G,k)$ is a yes-instance. Otherwise,~$(G,k)$ is a no-instance.

It remains to show that this algorithm is correct. Assume~$G$ contains~$p$ \pt s. Since every edge of~$G$ forms a \pt~with at most one other edge, all \pt s in~$G$ are edge-disjoint. Hence,~$p$ edge deletions are necessary. By Proposition~\ref{Proposition: No more conflicts after branching}~\textit{a)} we can eliminate exactly one \pt~with one edge deletion without producing other \pt s. By Proposition~\ref{Proposition: No more conflicts after branching}~\textit{b)} this can be done successively with every \pt, since after deleting one of its edges we do not produce LC-, LO-, IIZ-Diamonds, CC-Hourglasses or edges that form a \pt~with more than one other edge. Thus,~$p$ edge deletions are sufficient. Hence, the algorithm is correct.
\end{proof}

Next, we describe how to transform an arbitrary graph~$G$ into a nice graph~$G'$ by branching. To this end consider the following branching rules applied on an instance~$(G,k)$ of \BPDs.

\begin{branch} \label{BR: eliminate multi pts}
If there are three distinct edges~$e_1, e_2, e_3 \in E_r \cup E_b$ such that~$e_1$ forms a \pt~with~$e_2$ and with $e_3$, then branch into the cases
\begin{enumerate}
\item[•] $I_1:=(G-e_1,k-1)$, and
\item[•] $I_2:=(G-\{e_2, e_3\},k-2)$.
\end{enumerate}
\end{branch}

\begin{lemma}
Branching Rule~\ref{BR: eliminate multi pts} is correct.
\end{lemma}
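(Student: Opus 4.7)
The plan is to verify correctness by checking the two directions of the equivalence in the definition of a correct branching rule: $(G,k)$ is a yes-instance of \BPDs{} if and only if $I_1$ or $I_2$ is a yes-instance.

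For the easy direction, I would observe that any solution $S_1$ for $I_1$ satisfies $|S_1|\le k-1$ and turns $G-e_1$ into a \pt-free graph. Then $S_1\cup\{e_1\}$ has size at most $k$ and its deletion from $G$ yields~$G-e_1-S_1$, which is \pt-free. Similarly, any solution~$S_2$ for~$I_2$ gives a solution~$S_2\cup\{e_2,e_3\}$ of size at most~$k$ for~$(G,k)$. Thus both branches produce valid solutions for the original instance.

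For the nontrivial direction, assume that~$(G,k)$ has a solution~$S$ with~$|S|\le k$. Since~$e_1$ and~$e_2$ form a \pt{} in~$G$, the graph~$G-S$ can only be \pt-free if~$\{e_1,e_2\}\cap S\neq\emptyset$. Analogously, since~$e_1$ and~$e_3$ form a \pt{} in~$G$, we have~$\{e_1,e_3\}\cap S\neq\emptyset$. I would then distinguish two cases. If~$e_1\in S$, then~$S\setminus\{e_1\}$ has size at most~$k-1$ and is a solution for the instance~$I_1=(G-e_1,k-1)$, since~$(G-e_1)-(S\setminus\{e_1\})=G-S$ is \pt-free. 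Otherwise~$e_1\notin S$, which forces both~$e_2\in S$ and~$e_3\in S$ by the two conditions above; then~$S\setminus\{e_2,e_3\}$ has size at most~$k-2$ and is a solution for~$I_2=(G-\{e_2,e_3\},k-2)$ by the same observation.

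There is no real obstacle here; the argument is a textbook covering-style branching correctness proof. The only subtlety worth mentioning is that we must use the distinctness of $e_1,e_2,e_3$ to ensure that $|S\setminus\{e_2,e_3\}| \le |S|-2$ in the second case, which is precisely what the hypothesis of the rule guarantees. Combining both cases gives that at least one of~$I_1$ and~$I_2$ is a yes-instance whenever~$(G,k)$ is, completing the proof.
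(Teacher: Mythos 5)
Your proposal is correct and follows essentially the same argument as the paper: the easy direction adds the deleted edges back to a solution of $I_1$ or $I_2$, and the hard direction splits on whether $e_1\in S$, using that each of $e_2$ and $e_3$ forms a \pt{} with $e_1$ to force $e_2,e_3\in S$ when $e_1\notin S$. Your explicit remark about the distinctness of $e_2$ and $e_3$ is a nice (if minor) addition that the paper leaves implicit.
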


\begin{proof}
We show that~$(G,k)$ is a yes-instance of \BPDs~if and only if at least one of the instances~$I_1$ or~$I_2$ is a yes-instance of \BPDs.

$(\Leftarrow)$ Assume~$I_1$ is a yes-instance or~$I_2$ is a yes-instance. In each branching case~$I_i$, the parameter~$k$ is decreased by the exact amount~$p_i$ of edges deleted from~$G$. Therefore, if some~$I_i$ has a solution of size at most~$k-p_i$, then~$(G,k)$ has a solution.

$(\Rightarrow)$ Let~$S$ be a solution for~$G$. Since~$e_1$ and~$e_2$ form a \pt, at least one of these edges belongs to~$S$. If~$e_1 \in S$, then~$I_1$ is a yes-instance since we can transform~$G-e_1$ into a \ptt-free graph by deleting the at most~$k-1$ edges in~$S\setminus \{e_1\}$. Otherwise, if~$e_1 \not \in S$, then~$e_2, e_3 \in S$. Hence, $I_2$ is a yes-instance since we can transform~$G-\{e_2,e_3\}$ into a \ptt-free graph by deleting the at most~$k-2$ edges in~$S \setminus \{e_2,e_3\}$.
\end{proof}

\begin{branch} \label{BR: eliminate these crazy diamonds}
If there are vertices~$u,v,w,z \in V$ such that~$G[\{u,v,w,z\}]$ is an LC-Diamond (Figure~\ref{Figure: Branching Subgraphs} (a)) or an LO-Diamond (Figure~\ref{Figure: Branching Subgraphs} (b)) or an IIZ-Diamond (Figure~\ref{Figure: Branching Subgraphs} (c)), then branch into the cases
\begin{enumerate}
\item[•] $I_1:= (G-\{v,w\},k-1)$,
\item[•] $I_2:= (G-\{\{u,v\},\{u,z\}\},k-2)$, and
\item[•] $I_3:= (G-\{\{u,v\},\{v,z\}, \{w,z\}\},k-3)$.
\end{enumerate} 
\end{branch}

\begin{lemma}
Branching Rule~\ref{BR: eliminate these crazy diamonds} is correct.
\end{lemma}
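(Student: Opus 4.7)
The plan is to prove the equivalence by the two standard directions, with the forward direction requiring a careful case analysis per diamond type.

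The easy direction (existence of a solution in some child instance implies a solution for $(G,k)$) should be handled uniformly: if $S_i$ is a solution of size at most $k - p_i$ for $I_i$, where $p_i\in\{1,2,3\}$ is the number of edges deleted in that branch, then $S_i$ together with the $p_i$ deleted edges is a solution of size at most $k$ for $G$, because deleting that union from $G$ yields exactly $(G-\text{deleted edges})-S_i$, which is \pt-free by hypothesis.

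For the hard direction I would take a solution $S$ of $(G,k)$ and show $S$ must contain at least one of the three edge sets removed in $I_1,I_2,I_3$. The plan is to assume that none of the three containments holds and derive a contradiction by exhibiting a leftover \pt{} in $G-S$. Since $\{u,v\}$ and $\{v,w\}$ form a \pt{} in all three diamond types, $\{v,w\}\notin S$ forces $\{u,v\}\in S$; since branch~$I_2$ is assumed to fail, $\{u,z\}\notin S$; and since branch~$I_3$ is assumed to fail and $\{u,v\}\in S$, we also know either $\{v,z\}\notin S$ or $\{w,z\}\notin S$. This sets the stage for the per-case argument.

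For the LO- and IIZ-Diamonds, the triple $\{u,z,w\}$ induces a bicolored~$P_3$ on the edges $\{u,z\}$ and $\{w,z\}$, so $\{u,z\}\notin S$ forces $\{w,z\}\in S$, and branch $I_3$ failing then forces $\{v,z\}\notin S$. With both $\{u,v\}$ and $\{w,z\}$ removed from the diamond, the triple $\{v,w,z\}$ in $G-S$ retains exactly the bichromatic pair $\{v,w\},\{v,z\}$ (red/blue for LO, red/blue for IIZ), giving a \pt{} — contradiction. For the LC-Diamond the triple $\{u,z,w\}$ is monochromatic, so the argument is slightly different: if $\{v,z\}\notin S$, then after removing $\{u,v\}$, the triple $\{u,v,z\}$ retains the pair $\{u,z\},\{v,z\}$ of different colors, forming a \pt; if instead $\{v,z\}\in S$, then branch $I_3$ failing forces $\{w,z\}\notin S$, and after removing $\{v,z\}$ the triple $\{v,w,z\}$ retains the pair $\{v,w\},\{w,z\}$ of different colors, again a \pt.

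I expect the main obstacle to be keeping straight which \pt{} reappears in $G-S$ once one of the edges inside a triangle of the diamond has been deleted — these ``triangle-breaking'' \pt s are precisely the reason the three branches look asymmetric (one with $1$ edge, one with $2$ edges, one with $3$ edges). Once the bookkeeping of which triangle is being broken is clear, the contradiction in every subcase follows from inspecting a single triple of vertices.
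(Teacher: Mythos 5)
Your proposal is correct and uses essentially the same argument as the paper: both identify the same forced bicolored~$P_3$s (e.g.\ $\{u,z\},\{v,z\}$ and $\{v,w\},\{w,z\}$ in the LC-Diamond after $\{u,v\}$ is deleted) to show that any solution must contain one of the three deleted edge sets; you merely phrase the case analysis as a contradiction rather than chasing the forced deletions directly. The color checks in all three diamond cases are accurate, so no gap remains.
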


\begin{proof}
We show that~$(G,k)$ is a yes-instance of \BPDs~if and only if at least one of the instances~$I_1$, $I_2$, or~$I_3$ is a yes-instance of \BPDs.

$(\Leftarrow)$ This direction holds since in every instance~$I_i$ the parameter~$k$ is decreased by the exact amount of edges deleted from~$G$.

$(\Rightarrow)$ Let~$S$ be a solution for~$G$. In LO-Diamonds, LC-Diamonds, and in IIZ-Dia\-monds, the edges~$\{u,v\}$ and~$\{v,w\}$ form a \pt~in~$G$ and therefore~$\{u,v\} \in S$ or~$\{v,w\} \in S$. If~$\{v,w\} \in S$, then~$I_1$ is a yes-instance. Otherwise, if~$\{v,w\} \not \in S$ it follows that~$\{u,v\} \in S$. If~$\{u,z\} \in S$, we have~$\{u,v\},\{u,z\} \in S$ and therefore~$I_2$ is a yes-instance. So we assume~$\{v,w\} \not \in S$, $\{u,v\} \in S$, $\{u,z\} \not \in S$ and consider the following cases.

\textbf{Case 1:} $G[\{u,v,w,z\}]$ is an LC-Diamond. Then, $\{u,z\}$ and~$\{v,z\}$ form a \pt~in~$G-\{u,v\}$. Since~$\{u,z\} \not \in S$, it follows that~$\{v,z\} \in S$. The edges~$\{v,w\}$ and~$\{w,z\}$ form a \pt~in~$G-\{\{u,v\},\{v,z\}\}$ which implies~$\{v,w\} \in S$ or~$\{w,z\} \in S$. Since~$\{v,w\} \not \in S$, we have~$\{w,z\} \in S$. Thus,~$\{u,v\},\{v,z\},\{w,z\} \in S$ and therefore~$I_3$ is a yes-instance.

\textbf{Case 2:} $G[\{u,v,w,z\}]$ is an LO-Diamond. Then, $\{u,z\}$ and~$\{w,z\}$ form a \pt~in~$G-\{u,v\}$. Since~$\{u,z\} \not \in S$, it follows that~$\{w,z\} \in S$. The edges~$\{v,w\}$ and~$\{v,z\}$ form a \pt~in~$G-\{\{u,v\},\{w,z\}\}$ which implies~$\{v,w\} \in S$ or~$\{v,z\} \in S$. Since~$\{v,w\} \not \in S$, we have~$\{v,z\} \in S$. Thus,~$\{u,v\},\{v,z\},\{w,z\} \in S$ and therefore~$I_3$ is a yes-instance.

\textbf{Case 3:} $G[\{u,v,w,z\}]$ is an IIZ-Diamond. Then, $\{u,z\}$ forms a \pt~with~$\{w,z\}$ and with~$\{v,z\}$ in~$G-\{u,v\}$. Since~$\{u,z\} \not \in S$, it follows that~$\{w,z\}, \{v,z\} \in S$ and therefore~$I_3$ is a yes-instance.
\end{proof}

\begin{branch} \label{BR: eliminate CC-Hourglasses}
If there are vertices~$u,v,w,z_1,z_2 \in V$ such that~$G[\{u,v,w,z_1,z_2\}]$ is a CC-Hourglass as given in Figure~\ref{Figure: Branching Subgraphs} (d), then branch into the cases
\begin{enumerate}
\item[•] $I_1:= (G-\{v,w\},k-1)$,
\item[•] $I_2:= (G-\{\{u,v\},\{v,z_1\}\},k-2)$, and
\item[•] $I_3:= (G-\{\{u,v\},\{u,z_1\}, \{v,z_2\}\},k-3)$.
\end{enumerate} 
\end{branch}

\begin{lemma}
Branching Rule~\ref{BR: eliminate CC-Hourglasses} is correct.
\end{lemma}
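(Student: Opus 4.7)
The easy direction ($\Leftarrow$) is routine: each instance $I_i$ is obtained by deleting from $G$ exactly the edges by which the budget is decreased, so any solution $S_i$ of size at most $k-p_i$ for $I_i$ extends to a solution of size at most $k$ for $(G,k)$ by adding the deleted edges. The plan is to write this in one sentence and focus on the nontrivial direction.

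For ($\Rightarrow$), let $S$ be a solution of size at most $k$ for $(G,k)$. The first step is to identify the bicolored $P_3$s that the CC-Hourglass induces on $\{u,v,w,z_1,z_2\}$ and that must therefore be hit by $S$. Inspecting Figure~\ref{Figure: Branching Subgraphs}~(d), the only induced bicolored $P_3$s inside the hourglass are the one formed by $\{u,v\}$ and $\{v,w\}$ (the pair $\{u,w\}$ is a nonedge) and the one formed by $\{v,z_1\}$ and $\{v,z_2\}$ (the pair $\{z_1,z_2\}$ is a nonedge); the other pairs of differently colored edges sharing a vertex close up into triangles and are therefore not induced $P_3$s. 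The plan is to distinguish three cases based on how $S$ hits these two $P_3$s, corresponding exactly to the three branches.

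If $\{v,w\}\in S$, then $S$ is a solution for $I_1=(G-\{v,w\},k-1)$, since $(G-\{v,w\})-(S\setminus\{\{v,w\}\}) = G-S$ is $P_3$-free; hence $I_1$ is a yes-instance. Otherwise $\{v,w\}\notin S$, which forces $\{u,v\}\in S$, and from the second $P_3$ either $\{v,z_1\}\in S$ or $\{v,z_2\}\in S$. In the first subcase $\{u,v\},\{v,z_1\}\in S$, so $I_2$ is a yes-instance by the same argument.

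The main obstacle, and the place where care is required, is the remaining subcase $\{u,v\}\in S$, $\{v,z_1\}\notin S$, $\{v,z_2\}\in S$: here $I_3$ requires the additional edge $\{u,z_1\}$ to be in $S$. To see this, observe that deleting $\{u,v\}$ from $G$ turns the former triangle on $\{u,v,z_1\}$ into an induced bicolored $P_3$ on the edges $\{u,z_1\}$ (blue) and $\{v,z_1\}$ (red), since $u$ and $v$ are now nonadjacent. This $P_3$ survives in $G-S$ unless $S$ contains one of its two edges, and since $\{v,z_1\}\notin S$ by assumption, we must have $\{u,z_1\}\in S$. Therefore $\{u,v\},\{u,z_1\},\{v,z_2\}\in S$ and $S\setminus\{\{u,v\},\{u,z_1\},\{v,z_2\}\}$ is a solution of size at most $k-3$ for $I_3$. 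This exhausts all cases and completes the reduction.
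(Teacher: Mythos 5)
Your proof is correct and follows essentially the same route as the paper's: both hinge on the observation that deleting $\{u,v\}$ creates a new induced bicolored~$P_3$ on $\{u,z_1\}$ and $\{v,z_1\}$, which together with the $P_3$ on $\{v,z_1\},\{v,z_2\}$ forces the three branches. The only cosmetic difference is that the paper cases directly on whether $\{v,z_1\}\in S$ (noting it lies in two $P_3$s of $G-\{u,v\}$), whereas you first case on the original $P_3$ $\{v,z_1\},\{v,z_2\}$ and then derive $\{u,z_1\}\in S$; the logical content is identical.
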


\begin{proof}
We show that~$(G,k)$ is a yes-instance of \BPDs~if and only if at least one of the instances~$I_1$, $I_2$, or~$I_3$ is a yes-instance of \BPDs.

$(\Leftarrow)$ This direction obviously holds since in every instance~$I_i$ the parameter~$k$ is decreased by the exact amount of edges deleted from~$G$.

$(\Rightarrow)$ Let~$S$ be a solution for~$G$. The edges~$\{u,v\}$ and~$\{v,w\}$ form a \pt~in~$G$ and~therefore~$\{u,v\} \in S$ or $\{v,w\} \in S$. If~$\{v,w\} \in S$, then~$I_1$ is a yes-instance. Otherwise, if~$\{v,w\} \not \in S$, then~$\{u,v\} \in S$. The edge~$\{v,z_1\}$ forms a \pt~with~$\{u,z_1\}$ and~$\{v,z_2\}$~in~$G-\{u,v\}$. If~$\{v,z_1\} \in S$, then~$I_2$ is a yes-instance. Otherwise, if~$\{v,z_1\} \not \in S$, then~$\{u,z_1\},\{v,z_2\} \in S$. Hence, $I_3$ is a yes-instance.
\end{proof}

We use the Branching Rules \ref{BR: eliminate multi pts}--\ref{BR: eliminate CC-Hourglasses} to state the following theorem.

\begin{theorem} \label{Theorem: Branching Algorithm}
  \BPDs{} can be solved in~$\Oh(1.84^k\cdot nm)$~time.
\end{theorem}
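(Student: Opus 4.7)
The plan is to give a branching algorithm that, at each node of the search tree, applies the first applicable rule among Branching Rules~\ref{BR: eliminate multi pts}, \ref{BR: eliminate these crazy diamonds}, and~\ref{BR: eliminate CC-Hourglasses}, and otherwise solves the instance directly via Corollary~\ref{Corollary: Solve after Branching}. First I would verify that the algorithm is well-defined: if none of the three branching rules applies to an instance $(G,k)$, then every edge of~$G$ participates in at most one~\pt{} (this is precisely the negation of the premise of Branching Rule~\ref{BR: eliminate multi pts}), and $G$ contains no induced LC-Diamond, LO-Diamond, IIZ-Diamond, or CC-Hourglass (this is the negation of the premises of Branching Rules~\ref{BR: eliminate these crazy diamonds} and~\ref{BR: eliminate CC-Hourglasses}). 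Hence in that case $G$ is nice, so by Corollary~\ref{Corollary: Solve after Branching} we can decide $(G,k)$ in $\Oh(nm)$ time.

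Next I would bound the size of the search tree. Branching Rule~\ref{BR: eliminate multi pts} has branching vector $(1,2)$ with branching number equal to the positive root of $x^2 = x+1$, which is $\approx 1.6181$. Branching Rules~\ref{BR: eliminate these crazy diamonds} and~\ref{BR: eliminate CC-Hourglasses} both have branching vector $(1,2,3)$, whose branching number is the positive root of $x^3 = x^2 + x + 1$; a direct evaluation shows that this root is less than $1.84$. Since this is the worst of the three branching vectors, the search tree has at most $\Oh(1.84^k)$ leaves.

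For the time spent at each node I would argue as follows. Checking which branching rule applies reduces to enumerating all \pt s of~$G$ (doable in $\Oh(nm)$ time by iterating over vertices and their incident edge pairs), and then inspecting, for each edge~$e$, the set of \pt s that contain~$e$. From this list one can in $\Oh(nm)$ total time identify an edge that lies in two distinct \pt s (triggering Branching Rule~\ref{BR: eliminate multi pts}), or an induced LC-, LO-, or IIZ-Diamond or CC-Hourglass (triggering Branching Rule~\ref{BR: eliminate these crazy diamonds} or~\ref{BR: eliminate CC-Hourglasses}) by examining the common neighborhoods of the endpoints of a~\pt. Producing the corresponding child instances $G-E'$ takes $\Oh(m)$ time. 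Thus the work at every search-tree node is $\Oh(nm)$.

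Combining the tree-size bound with the per-node cost yields the claimed running time of $\Oh(1.84^k \cdot nm)$. Correctness follows inductively: at each node either the base case applies (handled by Corollary~\ref{Corollary: Solve after Branching}), or one of the three branching rules fires, all of which have been shown to be correct; the induction closes because the parameter strictly decreases along each branch. The main (minor) obstacle is to be careful in the initial reasoning that exhausting all three branching rules indeed leaves a nice graph, which is exactly the fact recorded in the first paragraph.
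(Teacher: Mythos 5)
Your overall plan is exactly the paper's: branch with Rules~\ref{BR: eliminate multi pts}--\ref{BR: eliminate CC-Hourglasses}, observe that an instance to which none of them applies is nice (so Corollary~\ref{Corollary: Solve after Branching} finishes in $\Oh(nm)$ time), and bound the tree by the worst branching vector $(1,2,3)$, whose branching number $\approx 1.8393$ is below $1.84$. The correctness argument and the tree-size bound are fine.

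The one step you have not actually established is the claim that a structure to branch on can be \emph{found} in $\Oh(nm)$ time per node. Enumerating all \pt s costs $\Oh(nm)$, but there can be $\Theta(nm)$ of them, and ``examining the common neighborhoods of the endpoints of a \pt'' for each of them to detect a diamond or hourglass is then $\Oh(n)$ work per \pt, i.e.\ $\Oh(n^2m)$ as stated --- and detecting a CC-Hourglass requires pairing up two \pt s with a common center, which is not covered by inspecting a single \pt{} at all. This is precisely the point where the paper spends most of its effort: it computes a maximal packing $\mathcal{P}$ of edge-disjoint \pt s, rejects if $|\bigcup\mathcal{P}|>2k$, and then searches only over tuples that contain an edge (for diamonds) or two edges (for hourglasses) of the packing, giving $\Oh(k^2nm)$ per node; the extra $k^2$ factor is absorbed into the slack between $1.8393^k$ and $1.84^k$. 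Your version can be repaired more directly: since you apply Branching Rule~\ref{BR: eliminate multi pts} first, whenever you reach the diamond/hourglass search every edge lies in at most one \pt, so the \pt s are edge-disjoint, there are at most $m/2$ of them, and at most $\deg(v)/2$ share a given center $v$; with an adjacency matrix this bounds the diamond search by $\Oh(nm)$ and the hourglass search by $\Oh(\sum_v \deg(v)^2)\subseteq\Oh(nm)$. Either fix closes the gap, but as written the per-node cost is asserted rather than proved.
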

\begin{proof}
We solve \BPDs~for an instance~$(G,k)$ as follows: Initially, we compute the adjacency matrix of~$G$ in~$\Oh(n^2)$ time. We then compute one of the structures we branch on, which is an induced LC-Diamond, LO-Diamond, IIZ-Diamond, CC-Hourglass or some edge which forms a \pt~with two other edges. Next, we branch according to the Branching Rules~\ref{BR: eliminate multi pts}, \ref{BR: eliminate these crazy diamonds}, and \ref{BR: eliminate CC-Hourglasses}. If no further branching rule is applicable we check in~$\Oh(nm)$ time whether the remaining instance is a yes-instance or not. This is possible by Corollary~\ref{Corollary: Solve after Branching}. The branching vectors are~$(1,2)$ for Branching Rule~\ref{BR: eliminate multi pts}, and~$(1,2,3)$ for Branching Rules~\ref{BR: eliminate these crazy diamonds} and~\ref{BR: eliminate CC-Hourglasses}. This delivers a branching factor smaller than~$1.8393$. We next describe in detail how we can find one of the structures we branch on, in such a way that the algorithm described above runs in~$\mathcal{O}(1.84^k \cdot nm )$~time

Isolated vertices do not contribute to the solution of the instance. Thus, we delete all isolated vertices in~$\mathcal{O}(n)$ time. Hence we can assume~$n \leq 2m$ in the following. Afterwards, we compute a maximal packing~$\mathcal{P}$ of edge-disjoint~\pt s. Here, we represent a~\pt{} by an edge set of size two. We define~$P := \bigcup_{p \in \mathcal{P}} p$ as the set of all edges of~\pt s in~$\mathcal{P}$. Note that~$\mathcal{P}$ and thus~$P$ can be found in~$\Oh(n m)$ time by enumerating all~\pt s in~$G$. If~$|P| > 2k$, the graph~$G$ contains more than~$k$ edge-disjoint \pt s and~$(G,k)$ is a no instance. In this case we can stop and return no. Otherwise, we have~$|P| \leq 2k$ and use~$P$ to compute the structures we apply Branching Rules~\ref{BR: eliminate multi pts}--\ref{BR: eliminate CC-Hourglasses} on as follows.

Note that in any LC-, LO- or IIZ- Diamond~$G[\{u,v,w,z\}]$ the edges~$\{u,v\}$ and~$\{v,w\}$ form a \pt{} in~$G$ and thus, $\{u,v\} \in P$ or~$\{v,w\} \in P$. Therefore, we can find~$G[\{u,v,w,z\}]$ in $\Oh(km)$ time by iterating over all possible tuples~$(e_1,e_2)$ with~$e_1 \in P$ and $e_2 \in E$ and then check with the adjacency matrix in~$\Oh(1)$ time if the induced subgraph~$G[e_1 \cup e_2]$ is an LC-, LO- or IIZ- Diamond.

Furthermore, observe that any CC-Hourglass~$G[\{u,v,w,z_1,z_2\}]$ contains the two edge-disjoint \pt s~$\{\{u,v\}, \{v,w\}\}$ and~$\{\{z_1,v\},\{v,z_2\}\}$. Thus, at least two edges of~$G[\{u,v,w,z_1,z_2\}]$ are elements of~$P$. Therefore, we can find~$G[\{u,v,w,z_1,z_2\}]$ in~$\Oh(k^2n^2)$ time by iterating over all possible tuples~$(e_1,e_2,x,y)$ with~$e_1, e_2 \in P$ and~$x,y \in V$ and then check with the adjacency matrix in~$\Oh(1)$ time if the induced subgraph~$G[e_1 \cup e_2 \cup \{x,y\}]$ is a CC-Hourglass.

Finally, let~$e_1$ be an edge that forms two~\pt s with other edges~$e_2$ and~$e_3$. Again, at least one of the edges~$e_1, e_2$, or~$e_3$ is an element of~$P$. Thus, we can find~$e_1$, $e_2$, and~$e_3$ in~$\Oh(kn^2)$ time iterating over all triples containing an edge from~$P$ and two vertices from~$V$, which form the remaining two endpoints.

Since $n \leq 2m$ we can find one of the structures on which we apply Branching Rules~\ref{BR: eliminate multi pts}--\ref{BR: eliminate CC-Hourglasses} in~$\Oh(k^2 nm)$ time. This gives us a total running time of~$\Oh(1.8393^k \cdot k^2 nm) \subseteq \Oh(1.84^k \cdot nm)$ time as claimed.
\end{proof}

It is possible to improve the branching rules on LO-Diamonds, IIZ-Diamonds, and CC-Hourglasses to obtain a branching vector~$(2,2,3,3)$, but  branching on LC-Diamonds still needs a branching vector of~$(1,2,3)$, which is the bottleneck. To put the running time of Theorem~\ref{Theorem: Branching Algorithm} into perspective note that \textsc{Cluster Deletion}, which can be viewed as the uncolored version of \BPDs, can be solved in~$\Oh(1.42^k+m)$ time \cite{BD11}. Thus there is a large gap between the running time bounds of the problems. It would be interesting to know if this gap can be closed or if \BPDs~is significantly harder than \textsc{Cluster Deletion}.

\subsection{On Problem Kernelization} Finally, we consider problem kernelization for \BPDs~parameterized by~$(k,\Delta)$ and~$\ell:=m-k$. Recall that~$\Delta$ denotes the maximum degree of the input graph. We show that \BPDs~admits problem kernels with~$\Oh(k\Delta\min(k,\Delta))$~vertices or at most~$2\ell$ edges, respectively.

In the following, we provide two reduction rules leading to an~$\Oh(k\Delta\min(k,\Delta))$ vertex kernel for \BPDs. The first reduction rule deletes all edges which form more than~$k$ \pt s.

\begin{reduc}
\label{reduc:edge-in-too-many-conflicts}
If~$G$ contains an edge~$\{ u,v\}$ such that there exist vertices~$w_1,\ldots ,w_t$ with~$t >k$ such that~$G[\{ u,v,w_i\}]$ is a \pt~for each~$i$, then remove~$\{ u,v\}$ and decrease~$k$ by one.
\end{reduc}
\begin{lemma}
Reduction Rule~\ref{reduc:edge-in-too-many-conflicts} is correct and can be applied exhaustively in~$\Oh(nm)$~time.
\end{lemma}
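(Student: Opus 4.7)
My plan is to verify the equivalence of $(G,k)$ and $(G-\{u,v\}, k-1)$ and then describe an $\Oh(nm)$-time implementation for applying the rule exhaustively. The correctness proof is a two-step argument: the easy direction observes that any solution of size at most $k-1$ for the reduced instance extends to a solution of size at most $k$ for $(G,k)$ by adding the edge $\{u,v\}$.

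For the other direction, given a solution $S$ for $(G,k)$, I would argue that $\{u,v\}$ must lie in $S$: if not, then for each $i \in \{1,\ldots,t\}$, destroying the \pt{} $G[\{u,v,w_i\}]$ forces either $\{u,w_i\}$ or $\{v,w_i\}$ into $S$. Since the $w_i$ are pairwise distinct, this yields $t>k$ pairwise distinct edges in $S$, contradicting $|S| \le k$. Hence $S \setminus \{\{u,v\}\}$ is a solution of size at most $k-1$ for $(G-\{u,v\}, k-1)$.

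For the running time, I would first enumerate all \pt s of $G$ in $\Oh(nm)$ time by iterating over every vertex and each pair of its neighbors. During this enumeration, I maintain for each edge $e$ a counter $\pi(e)$ equal to the number of \pt s containing $e$, together with pointers to their third vertices, as well as a set $Q$ of edges satisfying $\pi(e) > k$. To apply the rule exhaustively, I repeatedly pick an edge $e=\{u,v\}$ from $Q$, delete it, decrement $k$, and update the counters: for each $w_i$ associated with $e$ decrement $\pi$ of the remaining edge of the destroyed \pt{}, and for each common neighbor $w$ of $u,v$ increment $\pi(\{u,w\})$ and $\pi(\{v,w\})$ whenever $\{u,w\},\{v,w\}$ now induce a \pt{} in $G-e$ (since deleting $e$ turns such a triangle into an induced $P_3$); edges whose counters cross the threshold enter or leave $Q$ accordingly.

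The main obstacle will be certifying that the total update cost stays within $\Oh(nm)$. The argument I plan to use is that every edge is deleted at most once and that the update cost triggered by deleting $e=\{u,v\}$ is bounded by $\Oh(\deg(u)+\deg(v))$; summing over all potential deletions yields $\sum_{\{u,v\}\in E}(\deg(u)+\deg(v)) = \sum_{v\in V}\deg(v)^2 \in \Oh(nm)$, which together with the $\Oh(nm)$ enumeration establishes the claimed running time.
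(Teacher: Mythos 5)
Your correctness argument coincides with the paper's: the easy direction is handled identically, and for the converse both proofs show $\{u,v\}\in S$ by observing that otherwise each of the $t>k$ \pt s $G[\{u,v,w_i\}]$ forces its second edge into $S$, and these second edges are pairwise distinct because the $w_i$ are. Your phrasing is in fact slightly more careful than the paper's, which tacitly writes the second edge as $\{v,w_i\}$, whereas you correctly allow it to be either $\{u,w_i\}$ or $\{v,w_i\}$. The running-time scheme is also the same as the paper's: enumerate all \pt s once in $\Oh(nm)$ time, keep a per-edge counter, and after deleting $\{u,v\}$ update only edges incident to $u$ or $v$ (destroyed \pt s through $\{u,v\}$ and newly created ones through common neighbors of $u$ and $v$). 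Your bound $\sum_{\{u,v\}\in E}(\deg(u)+\deg(v))=\sum_v\deg(v)^2\in\Oh(nm)$ is valid; the paper instead bounds the number of applications by $k<m$ and each update by $\Oh(n)$.

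There is, however, one concrete slip in the exhaustive-application procedure. You let an edge enter or leave $Q$ only when \emph{its counter} crosses the threshold, but the threshold is the current budget $k$, which decreases by one with every application of the rule. An edge $f$ with $\pi(f)$ equal to the current budget is not in $Q$; if some other edge, not incident to $f$, is then deleted, $\pi(f)$ is unchanged but now exceeds the new budget, so the rule applies to $f$ --- yet $f$ never enters $Q$. Your algorithm would terminate with $Q=\emptyset$ while the rule is still applicable, so the rule is not applied exhaustively as the lemma requires. The repair is routine: for instance, keep the edges in buckets indexed by counter value and repeatedly compare the maximum occupied bucket with the current budget; since all counter changes are by one and there are $\Oh(nm)$ of them in total, tracking the maximum stays within the claimed time bound. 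With that adjustment the proof is complete.
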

\begin{proof}
First, we prove the correctness of Reduction Rule~\ref{reduc:edge-in-too-many-conflicts}. 
Let~$S$ be a solution for~$(G,k)$. 
Without loss of generality, consider an edge~$\{ u,v\}\in E_r$ such that there exist vertices~$w_1, \ldots , w_t$ such that for each~$i$ the graph~$G[\{ u,v,w_i\}]$ is a \pt. 
At least one edge of each \pt,~$G[\{ u,v,w_i\}]$ is an element of~$S$. 
Assume towards a contradiction~$\{ u,v\}\notin S$. 
In each \pt,~$G[\{ u,v,w_i\}]$ the blue edge which is either~$\{ v, w_i\}$ or~$\{u, w_i\}$ has to be removed. 
Note that for each~$w_i$ these are pairwise different edges.
Thus, since~$t >k$,~$|S|>k$, a contradiction to~$|S|\le k$. 
Hence,~$\{ u,v\}\in S$ and~$S\setminus\{\{ u,v\}\}$ is a solution for~$(G-\{ u,v\}, k-1)$. For the opposite direction, if~$S$ is a solution for~$(G-\{ u,v\}, k-1)$ then~$S\cup\{\{ u,v\}\}$ is is a solution for~$(G,k)$.

Second, we bound the running time of applying Reduction Rule~\ref{reduc:edge-in-too-many-conflicts} exhaustively. In a first step, for each edge~$e\in E$ compute the number of \pt s containing~$e$. This can be done in~$\Oh(nm)$~time. In a second step, check if an edge~$e=\{ u,v\}$ is part of more than~$k$ \pt s and remove~$e$ if this is the case. After the removal of~$e$, every new \pt~contains vertices~$u$ and~$v$. Hence, for each remaining vertex~$w\in V$, check if~$G[\{u,v,w\}]$ is a new \pt~in~$G-\{ u,v\}$. If yes, then update the number of \pt s for edges~$\{ u,w\}$ and~$\{ v,w\}$. This can be done in~$\Oh(n)$ time. 
Since~$k<m$, the overall running time of Reduction Rule~\ref{reduc:edge-in-too-many-conflicts} is~$\Oh(nm)$.
\end{proof}

Let~$\mathcal{P}$ denote the set of all vertices of~$G$ which are part of \pt s. Then, the set~$N[\mathcal{P}]$ contains all vertices which are either part of a \pt~or which are adjacent to a vertex in a \pt. In other words, a vertex~$v$ is contained in~$V\setminus N[\mathcal{P}]$ if and only if each vertex~$u\in N[v]$ is not part of any \pt . In the following, we present a reduction rule to remove all vertices in~$V\setminus N[\mathcal{P}]$.

\begin{reduc}
\label{reduc:edge-deletions-no-second-neighborhood}
If~$G$ contains a vertex~$v\in V$ such that each vertex~$u\in N[v]$ is not part of any \pt, then, remove~$v$ from~$G$.
\end{reduc}

To show that Rule~\ref{reduc:edge-deletions-no-second-neighborhood} is correct, we provide two simple lemmas about edge deletion sets.

\begin{lemma} \label{Lemma: Incident Deletion}
Let~$G=(V,E)$ be a graph and let~$S \subseteq E$ be an edge deletion set. If two edges~$\{u,v\} \in E$ and~$\{v,w\} \in E$ do not form a~\pt~in~$G$ and form a~\pt~in~$G-S$, then~$\{u,w\} \in S$.
\end{lemma}

The proof of Lemma~\ref{Lemma: Incident Deletion} is trivial and thus omitted.

\begin{lemma} \label{Lemma: deletion-sequence}
Let~$(G,k)$ be an instance of~\BPDs{} and let~$S$ be a solution for~$(G,k)$ of minimum size. Then, there exists an ordering~$(e_1, \dots, e_{|S|})$ of~$S$ such that for each~$i \in \{1, \dots, |S|\}$ the edge~$e_i$ is part of a~\pt~in~$G-\{e_1,\dots, e_{i-1}\}$.

\end{lemma}
\begin{proof}
Assume towards a contradiction that such an ordering does not exist. Then, for every ordering of~$S$, there exists a maximal index~$1\le i<|S|$ such that there is a finite sequence~$(e_1, \ldots , e_i)$ where for each~$1 \leq j \leq i$ the edge~$e_j$ is part of a \pt~in~$G-\{ e_1, \ldots , e_{j-1}\}$. According to our choice of~$i$, there exists no edge of~$S$ which is part of a \pt~in~$G-\{ e_1, \ldots , e_i\}$. If~$G-\{ e_1, \ldots , e_i\}$ contains a \pt{} formed by~$\{x,y\}$ and~$\{y,z\}$, then~$\{ x,y\} \not \in S$ and~$\{y,z\} \not \in S$. This contradicts the fact that~$S$ is a solution for~$(G,k)$. Otherwise,~$G-\{ e_1, \ldots , e_i\}$ is \ptt-free. Then,~$\{e_1, \ldots , e_i\} \subsetneq S$ is a solution for~$(G,k)$. This contradicts the fact that~$S$ is a solution of minimum size for~$(G,k)$.
\end{proof}

We next use the Lemmas~\ref{Lemma: Incident Deletion} and~\ref{Lemma: deletion-sequence} to show the correctness of Rule~\ref{reduc:edge-deletions-no-second-neighborhood}.

\begin{lemma}
Reduction Rule~\ref{reduc:edge-deletions-no-second-neighborhood} is correct and can be applied exhaustively in~$\Oh(nm)$~time.
\end{lemma}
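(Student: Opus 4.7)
My plan handles correctness (both directions) and the running time separately.

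Forward direction: if $S$ solves $(G,k)$, then $S \cap E(G-v)$ has size at most $|S|$ and $(G-v) - (S \cap E(G-v))$, being an induced subgraph of the \pt{}-free graph $G - S$, is \pt{}-free.

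For the reverse direction, I exploit two structural observations. First, the hypothesis forces every induced \pt{} of $G$ (and hence of $G-v$, since $G-v$ is an induced subgraph of $G$) to have all three vertices in $V \setminus N[v]$. Second, for any $u \in N(v)$ and any $q \in N(u) \setminus N[v]$, the edges $\{v,u\}$ and $\{u,q\}$ share the same color: otherwise $u$ not being in any \pt{} of $G$ would force $\{v,q\} \in E(G)$, contradicting $q \notin N[v]$. Given a solution $S'$ for $(G-v,k)$, I set $S'' := S' \cap E(G[V \setminus N[v]])$ and verify that $S''$ is still a solution for $(G-v,k)$: any induced \pt{} in $(G-v) - S''$ with a middle vertex $u \in N(v)$ would need two differently-colored edges $\{p,u\},\{u,r\}$ into $V\setminus N[v]$, which is impossible by the second observation (if $\{p,r\} \in E(G)$) or would already give a \pt{} of $G$ at $u$ (if $\{p,r\} \notin E(G)$); an endpoint case at $u \in N(v)$ similarly yields an induced \pt{} of $G$ containing $u$; and any \pt{} entirely within $V \setminus N[v]$ would already be a \pt{} in $(G-v) - S'$, contradicting $S'$ being a solution.

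It then suffices to show $S''$ solves $(G,k)$. Any induced \pt{} in $G - S''$ not involving $v$ is in $(G-v) - S''$, which we just ruled out; the endpoint case $(v,a,b)$ would witness an induced \pt{} of $G$ with $a \in N(v)$, contradicting the hypothesis; and the middle case $(a,v,b)$ would require $\{a,b\} \in S''$ with $a,b \in N(v)$, contradicting $S'' \subseteq E(G[V \setminus N[v]])$.

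For the running time, I enumerate all induced \pt{}s in $\Oh(nm)$ time, marking every vertex appearing in at least one. The precondition $N[v] \cap \{\text{marked}\}= \emptyset$ is then checkable in $\Oh(\deg v)$ per vertex, and since removing a vertex cannot create new induced \pt{}s, the marking remains valid across applications, so one sweep suffices for $\Oh(nm)$ total time. I do not foresee major obstacles; the principal subtlety is the second structural observation about edge colors, without which the cleanup from $S'$ to $S''$ need not preserve the solution property.
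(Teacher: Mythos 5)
Your proof is correct, but your backward direction takes a genuinely different route from the paper's. The paper fixes a \emph{minimum} solution $S$ for $G-v$, assumes $G-S$ still contains a \pt{} (which must then have $v$ as its middle vertex, so some $\{u,w\}\in S$ with $u,w\in N(v)$ was deleted), and derives a contradiction via an ordering claim: the deletions of $S$ can be ordered so that each one destroys a \pt{} present at the moment it is made, and the first deletion incident with $N(v)$ is then shown to be impossible using the same color observation you make (every edge from a neighbor $w$ of $v$ into $N^2(v)$ carries the color of $\{v,w\}$). You instead take an \emph{arbitrary} solution $S'$ for $G-v$, discard every deleted edge touching $N[v]$ to obtain $S''$, and verify directly that $S''$ still solves $G-v$ and extends to a solution of $G$. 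Your route avoids both the minimality assumption and the deletion-sequence machinery, which makes it shorter and more transparent; the case analysis is exhaustive and each case closes. One step is stated slightly out of order: in the middle-vertex case $p$--$u$--$r$ with $u\in N(v)$ you speak of ``two differently-colored edges into $V\setminus N[v]$'' before knowing $p,r\notin N[v]$ --- that membership only follows in the sub-case $\{p,r\}\in E(G)$, because an induced \pt{} then forces $\{p,r\}\in S''\subseteq E(G[V\setminus N[v]])$ --- but your parentheticals cover both sub-cases, so nothing is actually missing. The running-time argument (one marking pass suffices because removing an eligible vertex destroys no \pt{} and hence creates no newly eligible vertices) matches the paper's.
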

\begin{proof}

Let~$H:=G[V\setminus\{ v\}]$. We prove that there exists a solution~$S$ for~$(G,k)$ if and only if~$S$ is also a solution for~$(H,k)$. 

$(\Rightarrow )$ Let~$S$ be a solution for~$(G,k)$. Since~$H$ is an induced subgraph of~$G$,~$S\cap E(H)$ is a solution for~$(H,k)$. 

$(\Leftarrow )$ Let~$S$ be a solution for~$(H,k)$ of minimum size. If~$S=\emptyset$,~$S$ is also a solution for~$(G,k)$. Thus, in the following we assume that~$S\neq\emptyset$. We show that~$G-S$ is~\ptt-free. To this end, we provide a claim to show that there are no edge-deletions in the neighborhood of~$v$.

\begin{claim} \label{Claim: Deletions in S not in N[v]}
In~$S$, there are no edge deletions that are incident with some vertex in~$N_G[v]$.
\end{claim}

\begin{claimproof}
Assume towards a contradiction that~$S$ contains such edge deletions. We consider an ordering~$F=(e_1, \dots, e_{|S|})$ from Lemma~\ref{Lemma: deletion-sequence} such that every~$e_i$ is part of a~\pt~in~$H-\{e_1, \dots, e_{i-1}\}$. Note that this implies that every~$e_i$ is in a~\pt~in~$G-\{e_1, \dots, e_{i-1}\}$. Let~$t \in \{1, \dots, |S|\}$ be the minimum index such that~$e_t=:\{w,z\}$ is incident with some~$w \in N_G[v]$.  

Consider the case~$z \in N_G[v]$. Then, since~$\{w,z\}$ is part of a~\pt~in~$G-\{e_1, \dots, e_{t-1}\}$ and~$\{w,z\}$ is not part of a~\pt~in~$G$, Lemma~\ref{Lemma: Incident Deletion} implies that there exists some index~$j<t$ such that~$e_j \in S$ is incident with~$w$ or with~$z$. This contradicts the minimality of~$t$. Thus, we may assume~$z \in N_G^2(v)$ and therefore~$w \in N_G(v)$ for the rest of this proof. Without loss of generality we may assume that~$\{v,w\}$ is red. Observe that this implies that~$\{w,z\}$ is red, since~$v$ is not part of any~\pt~in~$G$.

Since~$\{w,z\}$ is part of a~\pt~in~$G-\{e_1, \dots, e_{t-1}\}$, there exists some vertex~$y$ such that~$G-\{e_1, \dots, e_{t-1}\}[w,z,y]$ is a~\pt. We show that the following cases are contradictory.

\textbf{Case 1:}~$y \in N_G(v)$\textbf{.}  Then, since~$G-\{e_1, \dots, e_{t-1}\}[w,z,y]$ is a \pt and~$G[w,z,y]$ is not a \pt, Lemma~\ref{Lemma: Incident Deletion} implies that there exists some index~$j<t$ and there is an edge~$e_j$ in the ordering~$F$ that is incident with~$y$ or with~$w$. This contradicts the minimality of~$t$ since~$y$ and~$w$ are elements of~$N_G(v)$.

\textbf{Case 2:}~$y \in N^2_G(v)$\textbf{.} Then,~$\{w,z\}$ and~$\{w,y\}$ do not form a~\pt~in~$G-\{e_1, \dots, e_{t-1}\}$ since all edges in~$E_G(\{w\},N_G^2(v))$ are red since~$v$ is not part of any~\pt~in~$G$. Consequently,~$\{w,z\}$ and~$\{z,y\}$ form a~\pt~in~$G-\{e_1, \dots, e_{t-1}\}$. Then, analogous to Case~1, Lemma~\ref{Lemma: Incident Deletion} implies that there exists some~$e_j$ with~$j<t$ and~$e_j=\{w,y\}$ which is a contradiction to the minimality of~$t$.

\textbf{Case 3:}~$y \in V \setminus (N_G(v) \cup N^2_G(v))$\textbf{.} Then,~$\{w,z\}$ and~$\{z,y\}$ form a~\pt~in~$G-\{e_1, \dots, e_{t-1}\}$. Moreover,~$\{w,y\} \not \in E$, since~$y \not \in N_G^2(v)$. Then,~$\{w,z\}$ and~$\{z,y\}$ also form a~\pt~in~$G$ contradicting the fact that no vertex in~$N_G[v]$ is part of a~\pt. $\hfill \Diamond$
\end{claimproof}

We next use Claim~\ref{Claim: Deletions in S not in N[v]} to show that~$G-S$ is~\ptt-free. Observe that it suffices to show that no edge incident with~$v$ is part of a~\pt~in~$G-S$. Consider an edge~$\{u,v\}\in E$. Then, by Claim~\ref{Claim: Deletions in S not in N[v]},~$u$ and~$v$ are incident with the same edges in~$G$ as in~$G-S$. Therefore, since~$\{u,v\}$ is not part of any~\pt~in~$G$, the edge~$\{u,v\}$ is not part of any~\pt~in~$G-S$. Consequently,~$S$ is a solution for~$(G,k)$.

It remains to consider the running time of applying Reduction Rule~\ref{reduc:edge-deletions-no-second-neighborhood} exhaustively. In a first step, determine all \pt s in~$G$. Afterwards, determine for each vertex~$v\in V$ if~$v$ is part of some \pt. This needs~$\Oh(nm)$~time. Now, check for each vertex~$v\in V$ if each vertex~$u\in N[v]$ is not part of any \pt. This can be done in~$\Oh(m)$~time. The claimed running time follows.
\end{proof}

\begin{theorem}
\label{theo:k2delta-kernel-bip3del}
\BPDs~admits a~$\Oh(k\Delta\min(k,\Delta))$-vertex kernel that can be computed in~$\mathcal{O}(nm)$~time.
\end{theorem}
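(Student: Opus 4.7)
The plan is to exhaustively apply Reduction Rules~\ref{reduc:edge-in-too-many-conflicts} and~\ref{reduc:edge-deletions-no-second-neighborhood}, and then to show that the resulting instance either has few vertices or can be identified as a no-instance. Both reduction rules only delete edges and vertices, so neither the maximum degree nor the parameter~$k$ increases during preprocessing. By the running time bounds of the two rules, this preprocessing takes~$\Oh(nm)$~time.

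Assume that after preprocessing~$(G,k)$ is a yes-instance and let~$S$ be a solution with~$|S|\leq k$. Let~$\mathcal{P}\subseteq V$ denote the set of vertices that are part of some~\pt{} in~$G$. I would bound~$|\mathcal{P}|$ in two ways. For the first bound, observe that every vertex~$v\in\mathcal{P}$ lies in some~\pt{} on vertices~$\{a,b,c\}$ with edges~$\{a,b\}$ and~$\{b,c\}$, at least one of which must be in~$S$; in every case~$v$ is within distance one of some endpoint of an edge in~$S$. Writing~$V(S)$ for the set of endpoints of edges in~$S$, this yields~$\mathcal{P}\subseteq N[V(S)]$, hence~$|\mathcal{P}|\leq (1+\Delta)\cdot|V(S)|\leq 2k(\Delta+1) = \Oh(k\Delta)$. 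For the second bound, recall that after Reduction Rule~\ref{reduc:edge-in-too-many-conflicts} every edge is contained in at most~$k$ distinct~\pt s. Since every~\pt{} contains at least one edge of~$S$, the total number of~\pt s in~$G$ is at most~$|S|\cdot k\leq k^2$, and since each~\pt{} has only three vertices we get~$|\mathcal{P}|\leq 3k^2 = \Oh(k^2)$. Combining both bounds gives~$|\mathcal{P}|=\Oh(k\min(k,\Delta))$.

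Next, after Reduction Rule~\ref{reduc:edge-deletions-no-second-neighborhood} every vertex of~$G$ lies in~$N[\mathcal{P}]$, since any vertex whose closed neighborhood contains no vertex of~$\mathcal{P}$ would have been removed. Therefore
\[
|V|\ =\ |N[\mathcal{P}]|\ \leq\ |\mathcal{P}|\cdot(1+\Delta)\ =\ \Oh(k\Delta\min(k,\Delta)).
\]
The preprocessing algorithm now proceeds as follows: apply Reduction Rules~\ref{reduc:edge-in-too-many-conflicts} and~\ref{reduc:edge-deletions-no-second-neighborhood} exhaustively; if the resulting graph exceeds the bound~$c\cdot k\Delta\min(k,\Delta)$ for the constant~$c$ implicit in the analysis above, then by the contrapositive of the argument the instance must be a no-instance and we return a trivial no-instance; otherwise the reduced instance itself is the kernel. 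The overall running time is dominated by the exhaustive application of the reduction rules and is therefore~$\Oh(nm)$.

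The main obstacle I anticipate is making the distance-one argument for the~$\Oh(k\Delta)$ bound fully rigorous across the different cases of how a vertex~$v\in\mathcal{P}$ can sit inside a~\pt{} (as middle vertex versus as endpoint, and with respect to which of the two edges lies in~$S$). The~$\Oh(k^2)$ bound is a more direct consequence of Reduction Rule~\ref{reduc:edge-in-too-many-conflicts} and a double-counting of~\pt{}-edge incidences; combining the two bounds into~$\min(k,\Delta)$ is then immediate.
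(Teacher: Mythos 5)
Your proposal is correct and follows essentially the same route as the paper: exhaustively apply Reduction Rules~\ref{reduc:edge-in-too-many-conflicts} and~\ref{reduc:edge-deletions-no-second-neighborhood}, bound the set~$\mathcal{P}$ of vertices occurring in \pt s by~$\Oh(k\min(k,\Delta))$, and conclude~$|V|=|N[\mathcal{P}]|\leq(\Delta+1)|\mathcal{P}|=\Oh(k\Delta\min(k,\Delta))$. The only cosmetic difference is that you certify the bound on~$|\mathcal{P}|$ via a hypothetical solution~$S$ (with a direct distance-one argument for the~$\Oh(k\Delta)$ part), whereas the paper counts \pt s through a maximal edge-disjoint packing of at most~$k$ of them; both yield the same bound, and your distance-one step indeed goes through in all cases.
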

\begin{proof}
First, apply Reduction Rule~\ref{reduc:edge-in-too-many-conflicts} exhaustively. Second, apply Reduction Rule~\ref{reduc:edge-deletions-no-second-neighborhood} exhaustively. This needs~$\Oh(nm)$~time altogether. We prove that~$G$ contains at most $12 \cdot k\Delta\min(k,\Delta)$~vertices if~$(G,k)$ is a yes-instance. Let~$\mathcal{P}$ be the set of vertices which are contained in a \pt~in~$G$, and let~$P$ be a maximal set of edge-disjoint \pt s in~$G$. If~$(G,k)$ is a yes-instance for \BPDs, then~$|P|\le k$. Since Reduction Rule~\ref{reduc:edge-in-too-many-conflicts} was applied exhaustively, each edge is part of at most~$\min(k,2\Delta)$ \pt s. Hence, the total number of \pt s in~$G$ is at most $2k\min(k,2\Delta)$. Consequently,~$|\mathcal{P}|\le 6k\min(k,2\Delta)$. Since Reduction Rule~\ref{reduc:edge-deletions-no-second-neighborhood} was applied exhaustively,~$V\setminus N[\mathcal{P}]=\emptyset$. In other words, set~$\mathcal{P}$ has no second neighborhood in~$G$. Since each vertex has degree at most~$\Delta$ we have $|{N}(\mathcal{P})|\le 6k\Delta\min(k,2\Delta)$. Hence, the overall number of vertices in~$G$ is~$12 \cdot k\Delta\min(k,\Delta)$ if~$(G,k)$ is a yes-instance for \BPDs.
\end{proof}
By the above,  \BPDs~admits a linear problem kernel in~$k$ if~$G$ has constant maximum degree.
Note that a kernelization by~$\Delta$ alone is unlikely since \BPDs~is NP-hard even if~$\Delta=8$ by Theorem~\ref{Theorem: NP-h}.  Since \BPDs~is fixed-parameter tractable with respect to parameter~$k$, we can trivially conclude that it admits an exponential-size problem kernel. It is open if there is a polynomial kernel depending only on~$k$ while \textsc{Cluster Deletion} has a relatively simple~$4k$-vertex kernel \cite{Guo09}. Summarizing, \BPDs~seems to be somewhat harder than \textsc{Cluster Deletion} if parameterized by~$k$. 

In contrast, \BPDs~seems to be easier than \textsc{Cluster Deletion} if parameterized by the dual parameter~$\ell:=m-k$: there is little hope that \textsc{Cluster Deletion} admits a problem kernel of size~$\ell^{\Oh(1)}$~\cite{GK18} while \BPDs{} has a trivial linear-size problem
kernel as we show below.

\begin{theorem}
  \BPDs{} admits a problem kernel with~$2\ell$ edges and~vertices which can be computed in~$\Oh(n+m)$ time.
\end{theorem}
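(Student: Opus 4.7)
The plan is to exploit a single, almost trivial observation: a graph whose edges are all the same color contains no bicolored $P_3$, since a bicolored $P_3$ needs one edge of each color. Therefore the minority color class is always a feasible solution, and $\min(|E_r|,|E_b|)$ is an upper bound on the optimum. The entire kernelization will rest on this fact.

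First I would introduce one reduction rule: if $k \geq \min(|E_r|,|E_b|)$, replace $(G,k)$ by a trivial yes-instance (e.g., the empty graph with parameter $0$). Correctness is immediate because deleting all edges of the smaller color class produces a monochromatic, and hence bicolored-$P_3$-free, graph using at most $k$ deletions. I would also remove all isolated vertices from $G$, which is trivially safe since such vertices cannot lie on any $P_3$.

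Second, I would bound the size of any reduced instance. Suppose the rule above no longer applies, so $\min(|E_r|,|E_b|) > k$. Then
\[
m \;=\; |E_r| + |E_b| \;>\; 2k,
\]
which, using $\ell = m - k$, gives $\ell > k$ and hence $m = k + \ell < 2\ell$. Since isolated vertices have been removed, every vertex has degree at least one and thus $n \leq 2m \leq 4\ell \in \Oh(\ell)$, as required. In particular the reduced instance has strictly fewer than $2\ell$ edges.

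Finally, the running time is $\Oh(n+m)$: counting $|E_r|$ and $|E_b|$ and computing vertex degrees can each be done in a single pass through the incidence structure, and the rule is applied at most once. There is essentially no obstacle here; the only thing worth stressing is the observation that $\min(|E_r|,|E_b|)$ is a trivial upper bound on the optimum, which immediately translates the budget inequality $k \geq \min(|E_r|,|E_b|)$ into the desired edge-count bound on the kernel.
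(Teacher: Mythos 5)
Your proposal is correct and rests on exactly the same observation as the paper's proof: the minority color class is always a feasible solution, so any instance with $m \geq 2\ell$ (equivalently, $k \geq \min(|E_r|,|E_b|)$) is a trivial yes-instance, and otherwise $m < 2\ell$. The only slight difference is that you explicitly handle the vertex count by deleting isolated vertices (yielding $n \leq 2m$, i.e.\ $\Oh(\ell)$ rather than the stated $2\ell$ vertices), a point the paper's own proof glosses over.
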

\begin{proof}
We show that instances with at least~$2\ell$ edges are trivial yes-instances. Let~$(G=(V,E_r,E_b),k)$ with~$|E| \geq 2\ell$ be an instance of \BPDs. Then, since~$E_r$ and~$E_b$ form a partition of~$E$, we have~$|E_r| \geq \ell$ or~$|E_b| \geq \ell$. Without loss of generality let~$|E_r| \geq \ell$. Since~$|E_b| = m - |E_r| \leq m - \ell = k$,~$E_b$ is a solution for~$(G,k)$.
\end{proof}


\section{Outlook}
We have initiated the algorithmic study of a natural edge-deletion problem on edge-colored graphs. In companion work, we considered the problem of destroying paths of length at least 4 that fulfill certain coloring constraints~\cite{EGKS21,Eck20}. With this exception, however, the study of graph modification problems on edge-colored graphs has been neglected so far. Consequently, the complexity of many natural problems and a study of natural edge-colored graph classes remain open. 

For the particular case of~\ptt-free graphs, we have also left open many questions for future work. First, it would be interesting to further
investigate the structure of \ptt-free graphs. Since each color class may
induce an arbitrary graph it seems difficult to obtain a concise and non-trivial structural
characterization of these graphs. One may, however, exploit the connection with Gallai
colorings which are colorings where no triangle receives more than two colors. In particular, the following characterization of Gallai colorings is known~\cite{Gal67,GS04}: any Gallai
coloring of a complete graph can be obtained by taking some complete 2-colored graph and substituting its vertices with a complete graph and some Gallai
coloring of this complete graph. This characterization relies on the
decomposition property that in any Gallai coloring of a complete graph~$G$ with at least three
colors, there is at least one edge color that spans a disconnected graph~$H$. Then, by the
property of Gallai colorings, the edges in~$G$ that are between two different components~$H_1$
and~$H_2$ of~$H$ have the same color. 

Second, there are many open questions concerning \textsc{\Pt{} Deletion}. Does \textsc{\Pt{} Deletion} admit a polynomial-size kernel for~$k$? Can \textsc{\Pt{}
  Deletion} be solved in~$2^{\Oh(n)}$ time? Can \textsc{\Pt{} Deletion} be solved in
polynomial time on graphs that contain no monochromatic~$P_3$? Can \textsc{\Pt{}
  Deletion} be solved in polynomial time on graphs that contain no cycle consisting only
of blue edges? Even simpler is the following question: Can \textsc{\Pt{} Deletion} be
solved in polynomial time if the subgraphs induced by the red edges and the subgraph induced by the blue edges are each a disjoint union of paths?
Moreover, it would be interesting to perform a similar study on \textsc{\Pt{} Editing} where we
may also insert blue and red edges. Furthermore, it is open if \textsc{Bicolored-$P_3$-free Completion} where we only may insert red or blue edges is NP-hard. Observe in this context that the uncolored problem \textsc{Cluster Completion} can easily be solved by adding all missing edges in each connected component.

Third, it would also be interesting to identify graph problems that are NP-hard on general
two-edge colored graphs but polynomial-time solvable on \ptt-free graphs. 
Finally, we were not able to resolve the following question: Can we find \pt{}s in~$\Oh(n+m)$ time?

Using the connection to Gallai colorings and the decomposition property of Gallai colorings seems to be a
promising approach to address these open questions.
\medskip


\begin{thebibliography}{10}

\bibitem{AS14}
David Amar and Ron Shamir.
\newblock Constructing module maps for integrated analysis of heterogeneous
  biological networks.
\newblock {\em Nucleic Acids Research}, 42(7):4208–4219, 2014.

\bibitem{ASS17}
N.~R. Aravind, R.~B. Sandeep, and Naveen Sivadasan.
\newblock Dichotomy results on the hardness of {$H$}-free edge modification
  problems.
\newblock {\em {SIAM} Journal on Discrete Mathematics}, 31(1):542--561, 2017.

\bibitem{BSY99}
Amir Ben-Dor, Ron Shamir, and Zohar Yakhini.
\newblock Clustering gene expression patterns.
\newblock {\em Journal of Computational Biology}, 6(3-4):281--297, 1999.

\bibitem{BB13}
Sebastian B{\"{o}}cker and Jan Baumbach.
\newblock Cluster editing.
\newblock In {\em Proceedings of the 9th Conference on Computability in Europe
  (CiE~'13)}, volume 7921 of {\em LNCS}, pages 33--44. Springer, 2013.

\bibitem{BD11}
Sebastian B{\"{o}}cker and Peter Damaschke.
\newblock Even faster parameterized cluster deletion and cluster editing.
\newblock {\em Information Processing Letters}, 111(14):717--721, 2011.

\bibitem{BHSW15}
Ulrik Brandes, Michael Hamann, Ben Strasser, and Dorothea Wagner.
\newblock Fast quasi-threshold editing.
\newblock In {\em Proceedings of the 23rd Annual European Symposium on
  Algorithms ({ESA}~'15)}, volume 9294 of {\em LNCS}, pages 251--262. Springer,
  2015.

\bibitem{BLS99}
Andreas Brandst\"{a}dt, Van~Bang Le, and Jeremy~P. Spinrad.
\newblock {\em Graph classes: a survey}.
\newblock SIAM, Philadelphia, PA, USA, 1999.

\bibitem{BHK15}
Sharon Bruckner, Falk H{\"{u}}ffner, and Christian Komusiewicz.
\newblock A graph modification approach for finding core-periphery structures
  in protein interaction networks.
\newblock {\em Algorithms for Molecular Biology}, 10:16, 2015.

\bibitem{Cyg+15}
Marek Cygan, Fedor~V. Fomin, Lukasz Kowalik, Daniel Lokshtanov, D{\'{a}}niel
  Marx, Marcin Pilipczuk, Michal Pilipczuk, and Saket Saurabh.
\newblock {\em Parameterized Algorithms}.
\newblock Springer, 2015.

\bibitem{DF13}
Rodney~G. Downey and Michael~R. Fellows.
\newblock {\em Fundamentals of Parameterized Complexity}.
\newblock Texts in Computer Science. Springer, 2013.

\bibitem{Eck20}
Nils~Jakob Eckstein.
\newblock Destroying multicolored paths and cycles in edge-colored graphs.
\newblock Bachelorarbeit, Philipps-Universität Marburg, 2020.

\bibitem{EGKS21}
Nils~Jakob Eckstein, Niels Gr{\"{u}}ttemeier, Christian Komusiewicz, and Frank
  Sommer.
\newblock Destroying multicolored paths and cycles in edge-colored graphs.
\newblock {\em CoRR}, abs/2104.03138, 2021.

\bibitem{FK10}
Fedor~V. Fomin and Dieter Kratsch.
\newblock {\em Exact Exponential Algorithms}.
\newblock Texts in Theoretical Computer Science. An EATCS Series. Springer,
  2010.

\bibitem{Gal67}
Tibor Gallai.
\newblock Transitiv orientierbare {G}raphen.
\newblock {\em Acta Mathematica Hungarica}, 18(1--2):25--66, 1967.

\bibitem{GK18}
Niels Gr{\"{u}}ttemeier and Christian Komusiewicz.
\newblock On the relation of strong triadic closure and cluster deletion.
\newblock {\em Algorithmica}, 82(4):853--880, 2020.

\bibitem{Guo09}
Jiong Guo.
\newblock A more effective linear kernelization for cluster editing.
\newblock {\em Theoretical Computer Science}, 410(8-10):718--726, 2009.

\bibitem{GS04}
Andr{\'{a}}s Gy{\'{a}}rf{\'{a}}s and G{\'{a}}bor Simonyi.
\newblock Edge colorings of complete graphs without tricolored triangles.
\newblock {\em Journal of Graph Theory}, 46(3):211--216, 2004.

\bibitem{HWL+15}
Marc Hellmuth, Nicolas Wieseke, Marcus Lechner, Hans-Peter Lenhof, Martin
  Middendorf, and Peter~F. Stadler.
\newblock Phylogenomics with paralogs.
\newblock {\em Proceedings of the National Academy of Sciences},
  112(7):2058--2063, 2015.

\bibitem{HK73}
John~E. Hopcroft and Richard~M. Karp.
\newblock An n\({}^{\mbox{5/2}}\) algorithm for maximum matchings in bipartite
  graphs.
\newblock {\em {SIAM} Journal on Computing}, 2(4):225--231, 1973.

\bibitem{IPZ01}
Russell Impagliazzo, Ramamohan Paturi, and Francis Zane.
\newblock Which problems have strongly exponential complexity?
\newblock {\em Journal of Computer and System Sciences}, 63(4):512--530, 2001.

\bibitem{KAB+14}
Mikko Kivel{\"{a}}, Alex Arenas, Marc Barthelemy, James~P. Gleeson, Yamir
  Moreno, and Mason~A. Porter.
\newblock Multilayer networks.
\newblock {\em Journal of Complex Networks}, 2(3):203--271, 2014.

\bibitem{Kom18}
Christian Komusiewicz.
\newblock Tight running time lower bounds for vertex deletion problems.
\newblock {\em ACM Transactions on Computation Theory}, 10(2):6:1--6:18, 2018.

\bibitem{LY80}
John~M. Lewis and Mihalis Yannakakis.
\newblock The node-deletion problem for hereditary properties is {NP}-complete.
\newblock {\em Journal of Computer and System Sciences}, 20(2):219--230, 1980.

\bibitem{NG13}
James Nastos and Yong Gao.
\newblock Familial groups in social networks.
\newblock {\em Social Networks}, 35(3):439--450, 2013.

\bibitem{Sche19}
Jannik Schestag.
\newblock Liechtenstein-{$P_3$}s in two-colored graphs.
\newblock Bachelorarbeit, Philipps-Universität Marburg, 2019.

\bibitem{SST04}
Ron Shamir, Roded Sharan, and Dekel Tsur.
\newblock Cluster graph modification problems.
\newblock {\em Discrete Applied Mathematics}, 144(1-2):173--182, 2004.

\bibitem{SK18}
Frank Sommer and Christian Komusiewicz.
\newblock Parameterized algorithms for module map problems.
\newblock {\em Discrete Applied Mathematics}, 283:396--416, 2020.

\bibitem{Tar74}
Robert~Endre Tarjan.
\newblock A note on finding the bridges of a graph.
\newblock {\em Information Processing Letters}, 2(6):160--161, 1974.

\bibitem{Tovey84}
Craig~A. Tovey.
\newblock A simplified {NP}-complete satisfiability problem.
\newblock {\em Discrete Applied Mathematics}, 8(1):85--89, 1984.

\bibitem{WBLR07}
Tobias Wittkop, Jan Baumbach, Francisco~P Lobo, and Sven Rahmann.
\newblock Large scale clustering of protein sequences with force-a layout based
  heuristic for weighted cluster editing.
\newblock {\em BMC Bioinformatics}, 8(1):396, 2007.

\bibitem{Yan81}
Mihalis Yannakakis.
\newblock Edge-deletion problems.
\newblock {\em SIAM Journal on Computing}, 10(2):297--309, 1981.

\end{thebibliography}

\end{document}